
%
\documentclass[10pt,conference]{styles/IEEEtran}
\pdfminorversion=4

\IEEEoverridecommandlockouts                              
\overrideIEEEmargins

\usepackage{tikz}
\usetikzlibrary{positioning} 
\usepackage[tbtags]{amsmath}
\usepackage{enumerate}
\usepackage{graphicx}
\usepackage{amsfonts}
\usepackage{caption}
\usepackage{subcaption}
\DeclareMathOperator*{\argmin}{arg\,min}
\DeclareMathOperator{\diag}{diag}
\DeclareMathOperator{\blkdiag}{blkdiag}
\DeclareMathOperator{\col}{col}

\DeclareMathOperator{\real}{Re}

\DeclareMathOperator{\interior}{int}
\DeclareMathOperator{\st}{s.t.}

\newtheorem{thm}{Theorem}
\newtheorem{lemma}{Lemma}

\newtheorem{defn}{Definition}

\newtheorem{remark}{Remark}
\newtheorem{asmp}{Assumption}

\newcommand{\reals}{\mathbb{R}}

\tikzstyle{block} = [draw, fill=white, rectangle, 
    minimum height=2em, minimum width=2em]
\tikzstyle{sum} = [draw, fill=white, circle, node distance=0.5cm, inner sep=0pt, minimum size=0.25cm]
\tikzstyle{input} = [coordinate]
\tikzstyle{output} = [coordinate]
\tikzstyle{pinstyle} = [pin edge={to-,thin,black}]
\tikzstyle{branch}=[fill,shape=circle,minimum size=3pt,inner sep=0pt]
\tikzstyle{gnode} = [draw, fill=white, circle, node distance=0.5cm, inner sep=0pt, minimum size=0.5cm]

\begin{document}

\title{\Large{An inexact-penalty method for GNE seeking in games with dynamic agents}} 


\author{Andrew R.~Romano and Lacra Pavel%
	\thanks{
	This work was supported by NSERC Alliance and Huawei. A. R. Romano and L. Pavel are with ECE Department, University of Toronto, Canada. 
	{\tt\small andrew.romano@mail.utoronto.ca, pavel@ece.utoronto.ca}}%
}

\maketitle

\begin{abstract}
We consider a network of autonomous agents whose outputs are actions in a game with coupled constraints. In such network scenarios, agents seeking to minimize coupled cost functions using distributed information while satisfying the coupled constraints. Current methods consider the small class of multi-integrator agents using primal-dual methods. These methods can only ensure constraint satisfaction in steady-state. In contrast, we propose an inexact penalty method using a barrier function for nonlinear agents with equilibrium-independent passive dynamics. We show that these dynamics converge to an $\varepsilon$-GNE while satisfying the constraints for all time, not only in steady-state. We develop these dynamics in both the full-information and partial-information settings. In the partial-information setting, dynamic estimates of the others' actions are used to make decisions and are updated through local communication. Applications to optical networks and velocity synchronization of flexible robots are provided.
\end{abstract}

\section{Introduction}
Game theory has become a widely used tool in the control of multi-agent systems, having found many areas of applications such as in power control of communication networks \cite{ab05} and formation control for robotics \cite{lqs14}. In a generalized game, the relevant equilibrium is the generalized Nash equilibrium, or GNE. At a GNE, each agent (player) is minimizing its own, coupled cost-function subject to the constraints, given that the other agents' actions remain fixed. Many GNE seeking algorithms assume that each agent has full-information of the others' actions and/or each has has no inherent dynamics. These are restrictive assumptions in many control applications as information my be distributed among agents who must communicate with another and agents may have inherent dynamics, e.g., a group of mobile robots. Any GNE seeking agorithm that is applied in these scenarios must deal with these two issues. Recently, NE seeking with partial-decision information has been considered for networks of dynamic agents (multi-integrator, LTI) 
\cite{guoDePersis2019}-\cite{romanoPavel2019}, but most existing results are restricted to games with decoupled constraints. Considering games with coupled constraints, results exist for integrator agents, e.g., \cite{pavel2019} in discrete-time and \cite{lu2019}-\cite{ZOU2021109535} in continuous-time and for multi-integrator agents in continuous-time \cite{bianchiGrammatico2019}. Semi-decentralized methods have also been considered \cite{dePersisGrammatico2018}  as well as partial-information on the dual variables only \cite{pengPavel2019}. 
However, existing methods, such as primal-dual methods, ensure that the coupled constraints are satisfied in steady-state only. These algorithms are not applicable in real-world applications where the constraints must be satisfied for all time, e.g., sensor networks \cite{sjs12}, demand-side management in smart-grids \cite{mwjsl10}, or optical networks \cite{lp06}.

In this paper, we investigate an inexact-penalty based dynamics for GNE seeking with passive agents in networks. These dynamics can ensure that the coupled constraints are satisfied for all time, not just in steady-state. Moreover, this approach allows for the extension from the full-decision information setting to the partial-information one.

Penalty methods, using exact penalty functions, have been used in GNE seeking algorithms \cite{facchineiKanzow2010b}-\cite{fukushima2011}. In our work, in order to enforce constraint satisfaction, the penalty function takes the form of a barrier function that prevents each agent's action from exiting the interior of the constraint set. Related to our work, \cite{fabiani2019} considers NE seeking in potential games with virtual couplings used to satisfy connectivity constraints for all time. Compared to \cite{fabiani2019}, we allow for arbitrary convex inequality constraints in non-potential games. As far as we are aware, there do not exist any general GNE seeking methods that can ensure constraint satisfaction for all time.

\textit{Contributions. }
Interior point methods are a much used tool in convex optimization \cite{boyd_vandenberghe_2004}. We propose using the log-barrier function on the coupled inequality constraints as our inexact penalty function. The GNE problem is then converted into an NE problem whose costs go to infinity at the boundary of the constraint set. We consider GNE seeking using this new set of penalized cost functions for nonlinear agents with a class of equilibrium-independent passive (EIP) dynamics. The benefit of considering these agents is two-fold. First, we are able to capture versions of a variety of already known NE seeking algorithms. Secondly, we are able to consider certain types of dynamic agents. For the full-information case a gradient based feedback is used. In partial information setting, we instead use a Laplacian based feedback, \cite{gpTAC18}, for the case where agents have full knowledge of the constraint information.

In both these cases, assuming that the initial conditions satisfy the constraints, we show that the resulting action trajectories also satisfy the constraints for all time. Additionally, we extend these results to cases where the agents do not have full-information about the constraints, but must communicate in order to get it using a two-time scale approach.

A preliminary version of this work appeared in \cite{romanopavel2020} concerning systems with LTI passive dynamics. Only cases with full-information of the constraint info are treated therein.

This paper is organized as follows: In Section \ref{sec:background}, we provide the necessary background information on invariance, passivity and graph theory. In Section \ref{sec:problem}, we formulate the inexact penalized GNE problem. In Section \ref{sec:nonlinearFullInfo}, we provide a GNE seeking strategy for agents' with EIP dynamics under full-information of the others' actions. In Section \ref{sec:nonlinearpartInfo}, these results are extended to the case of partial action information but full-knowledge of the constraint. Section \ref{sec:fullyDist} considers a fully-distributed algorithm based on a two-time scale approach. Sections \ref{sec:OSNR} and \ref{sec:robots} provide applications to optical networks and velocity synchronization respectively.

\emph{Notations}: Let $\reals$ and $\reals_{\geq 0}$denote the real numbers and non-negative real numbers, respectively. Given $x,y \in \reals^n$, $x\top y$ denotes their inner product. Let $\|\cdot\|:\reals^n \rightarrow \real_{\geq 0}$ denote the Euclidean norm and $\|\cdot\|:\reals^n \rightarrow \reals_{\geq 0}$ its induced matrix norm. Given a set $\mathcal{X} \subset \reals^n$, $\|\cdot\|_{\mathcal{X}}:\reals^n \rightarrow \real_{\geq 0}$ denotes the Euclidean point-to-set distance. Given a function $\phi:\reals^n \rightarrow \reals$ and a vector field $f:\reals^n \rightarrow \reals^n$, $L_f \phi = \nabla \phi^\top f$ is the Lie-derivative of $\phi$ along $f$.

\section{Background}\label{sec:background}
\subsection{Positive- and Output-Positive-Invariance}
Consider a system with dynamics given by
\vspace{-0.2cm}
\begin{align} \label{eq:positiveInv}
\begin{split}
\dot x &= f(x)\\
y&=h(x)
\end{split}
\end{align}
where $x\in \reals^n$ and $y\in \reals^m$ and $f:\reals^n\rightarrow \reals^n$ is locally Lipschitz. The following are some standard results concerning positive-invariance of sets from, e.g., \cite{blanchini_2015}.
\begin{defn}
	A set $\mathcal{X}\subset \reals^n$ is called positively-invariant for \eqref{eq:positiveInv} if for all $x(0)\in \mathcal{X}$, $x(t)\in \mathcal{X}$ for all $t \geq 0$.
\end{defn}
\begin{defn}
	A set $\mathcal{Y}\subset \reals^m$ is called output-positively-invariant \eqref{eq:positiveInv} if $y(0)\in \mathcal{Y}$ implies that $y(t)\in \mathcal{Y}$ for all $t \geq 0$.
\end{defn}
\begin{defn}
	The Bouligand tangent cone of the set $\mathcal{X}$ at $x$ is $T_\mathcal{X}(x) = \{v\in \reals^n: \lim\inf_{t\rightarrow 0^+} \frac{\|x+tv\|_\mathcal{X}}{t}=0 \}$
\end{defn}
\begin{defn} (Definition 4.9 \cite{blanchini_2015})\label{def:practicalSet}
	Let $\mathcal{O}$ be an open set. A set $S \subset \mathcal{O}$ is a practical set if
	\begin{enumerate}
		\item $S$ is defined by a finite set of inequalities\vspace{-0.2cm}
		\begin{align*}
		S = \{x \in \reals^n: h_k(x) \leq 0, k = 1,\dots,r\}
		\end{align*}
		where $h_k(x)$ are continuously differentiable functions defined on $\mathcal{O}$.
		\item For all $x \in S$, there exists $z$ such that\vspace{-0.2cm}
		\begin{align*}
		h_k(x) + \nabla h_k(x)^\top z <0, \quad \forall \  k
		\end{align*}
		\item There exists a Lipschitz continuous vector field $f_0(x)$ such that for all $x \in \partial S$,\vspace{-0.2cm}
		\begin{align*}
		L_{f_0} h_k < 0, \quad \forall \ k \ \st \ h_k(x)=0
		\end{align*}
	\end{enumerate}
\end{defn}
\begin{lemma} \label{thm:invariant}
	A closed set $\mathcal{X}\subset \reals^n$ is positively-invariant if and only if $f(x)\in T_\mathcal{X}(x)$ for all $x\in \mathcal{X}$.
\end{lemma}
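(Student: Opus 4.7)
This is Nagumo's classical positive invariance theorem, and my plan is to sketch both implications while pointing to \cite{blanchini_2015} for the full details.

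\emph{Necessity} ($\Rightarrow$) is short. Fix $x \in \mathcal{X}$ and let $x(\cdot)$ be the solution of $\dot x = f(x)$ with $x(0) = x$. Local Lipschitzness of $f$ gives the first-order expansion $x(t) = x + tf(x) + o(t)$. Invariance forces $x(t) \in \mathcal{X}$, hence $\|x + tf(x)\|_\mathcal{X} \leq \|x + tf(x) - x(t)\| = o(t)$. Dividing by $t$ and taking $\liminf_{t\to 0^+}$ is exactly the condition $f(x) \in T_\mathcal{X}(x)$.

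\emph{Sufficiency} ($\Leftarrow$) is the delicate half. The strategy I would follow is a Gronwall estimate on the distance-to-set $d_\mathcal{X}(z) := \|z\|_\mathcal{X}$ along the trajectory. For each $z$ in a neighborhood of the would-be trajectory, choose a projection $\pi(z) \in \mathcal{X}$ (which exists by closedness of $\mathcal{X}$). Applying the tangent-cone hypothesis at $\pi(x(t)) \in \mathcal{X}$ together with the $1$-Lipschitzness of $d_\mathcal{X}$ yields
\[
d_\mathcal{X}\!\left(x(t) + h\, f(\pi(x(t)))\right) \leq d_\mathcal{X}(x(t)) + o(h).
\]
Writing $x(t+h) = x(t) + h f(x(t)) + o(h)$ and bounding $\|f(x(t)) - f(\pi(x(t)))\| \leq L\, d_\mathcal{X}(x(t))$ with the local Lipschitz constant $L$ of $f$, I would then obtain
\[
d_\mathcal{X}(x(t+h)) \leq (1 + hL)\, d_\mathcal{X}(x(t)) + o(h),
\]
so that the upper Dini derivative satisfies $D^+ d_\mathcal{X}(x(t)) \leq L\, d_\mathcal{X}(x(t))$. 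Since $d_\mathcal{X}(x(0)) = 0$, Gronwall's inequality gives $d_\mathcal{X}(x(t)) \equiv 0$, and closedness of $\mathcal{X}$ yields $x(t) \in \mathcal{X}$ for all $t \geq 0$.

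\emph{Main obstacle.} I expect the technical hinge to be the $o(h)$ bound on $d_\mathcal{X}(x(t) + h f(\pi(x(t))))$: the projection $\pi$ need not be single-valued and $d_\mathcal{X}$ need not be smooth on $\partial\mathcal{X}$, so one must either work with Bouligand/Dini derivatives of $d_\mathcal{X}$ or replace the above argument by an Euler-type polygonal scheme with steps along $f(\pi(\cdot))$, showing the discrete trajectories remain within $O(h)$ of $\mathcal{X}$ and, by local Lipschitzness of $f$, converge to the unique solution of $\dot x = f(x)$. All remaining steps are standard Gronwall bookkeeping.
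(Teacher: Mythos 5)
The paper offers no proof of this lemma: it is quoted as a standard result (Nagumo's theorem) from \cite{blanchini_2015}, so there is no in-paper argument to compare against. Your sketch is the classical proof and is essentially sound: the necessity direction via the first-order expansion $x(t)=x+tf(x)+o(t)$ and $1$-Lipschitzness of the distance is complete as written, and the sufficiency direction via a Gronwall bound on $d_{\mathcal{X}}(x(t))$ (or the Euler polygonal alternative you mention) is the standard route. One precision point you should make explicit: the Bouligand condition is only a $\liminf$, so the estimate $d_{\mathcal{X}}\bigl(\pi(x(t))+h\,f(\pi(x(t)))\bigr)=o(h)$ holds only along a sequence $h_k\downarrow 0$, and hence your inequality controls the \emph{lower} right Dini derivative $D_{+}d_{\mathcal{X}}(x(t))\leq L\,d_{\mathcal{X}}(x(t))$ rather than $D^{+}$; this still suffices, since for continuous functions a comparison/Gronwall argument goes through with the lower right Dini derivative (or one bypasses it entirely with the polygonal scheme). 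Two further housekeeping items: local Lipschitzness of $f$ gives uniqueness of solutions, which is what lets weak invariance (viability) coincide with the positive invariance claimed in the lemma, and the conclusion $x(t)\in\mathcal{X}$ is asserted on the maximal interval of existence (for the sets used later in the paper, compactness of $S$ then gives global existence). With those remarks your outline matches the cited reference's proof.
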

\begin{lemma} \label{thm:tangent}
	Let $S\!\subset\! \mathcal{O}$ be a practical set. Then for all $x\! \in\! \partial S$,
	\begin{align*}
	T_S(x) = \{ z \in \reals^n : L_z h_k(x) \leq 0, \forall \ k \ \st \ h_k(x)=0 \}.
	\end{align*}
\end{lemma}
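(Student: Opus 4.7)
The plan is to prove the stated equality by showing both inclusions, denoting the right-hand side by $C(x) = \{z \in \reals^n : L_z h_k(x) \leq 0 \text{ for every } k \text{ with } h_k(x)=0\}$. The inclusion $T_S(x) \subseteq C(x)$ is essentially the linearization of the feasibility condition and uses only item 1 of Definition \ref{def:practicalSet}; the reverse inclusion $C(x) \subseteq T_S(x)$ is where the Slater-like condition in item 2 is essential.

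For $T_S(x) \subseteq C(x)$, I would take $v \in T_S(x)$ and extract, from the liminf in the tangent-cone definition, a sequence $t_n \to 0^+$ together with points $x_n \in S$ satisfying $\|x + t_n v - x_n\| = o(t_n)$. For any index $k$ with $h_k(x) = 0$, a first-order Taylor expansion of $h_k$ at $x$ together with $h_k(x_n) \leq 0$ gives $t_n \nabla h_k(x)^\top v + o(t_n) \leq 0$. Dividing by $t_n$ and sending $n \to \infty$ yields $\nabla h_k(x)^\top v = L_v h_k(x) \leq 0$, so $v \in C(x)$.

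For the converse $C(x) \subseteq T_S(x)$, I would invoke item 2 of Definition \ref{def:practicalSet} to pick a vector $w$ with $h_k(x) + \nabla h_k(x)^\top w < 0$ for every $k$; at the active indices this reduces to $\nabla h_k(x)^\top w < 0$. Given $z \in C(x)$ and $\epsilon > 0$, set $z_\epsilon := z + \epsilon w$. For every active $k$, $\nabla h_k(x)^\top z_\epsilon \leq \epsilon \nabla h_k(x)^\top w < 0$, and a first-order expansion shows $h_k(x + t z_\epsilon) < 0$ for all sufficiently small $t > 0$; for inactive $k$, continuity of $h_k$ keeps $h_k(x + t z_\epsilon) < 0$ for small $t$. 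Hence $x + t z_\epsilon \in S$ on some interval $(0,\delta)$, so the distance in the liminf vanishes and $z_\epsilon \in T_S(x)$. Since the Bouligand tangent cone is closed and $z_\epsilon \to z$ as $\epsilon \to 0^+$, we conclude $z \in T_S(x)$.

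The main obstacle is the second inclusion: without the qualification in item 2, directions satisfying $L_z h_k(x) \leq 0$ at all active constraints may fail to be tangent to $S$ (for example, when two active constraints are tangent to one another, producing a cusp where no feasible first-order motion is possible along the common tangent). The perturbation by the strictly inward direction $w$ is exactly what opens a feasible wedge around $z$, and the closedness of $T_S(x)$ then recovers $z$ itself in the limit $\epsilon \to 0^+$.
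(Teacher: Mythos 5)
Your proof is correct: both inclusions are established by the standard argument, the forward one by a first-order expansion of the active $h_k$ along a recovering sequence from the liminf, and the reverse one by perturbing with the strictly feasible direction guaranteed by item 2 of Definition \ref{def:practicalSet} and then using closedness of the Bouligand cone. The paper states Lemma \ref{thm:tangent} without proof, citing it as a standard result from \cite{blanchini_2015}, and your argument is essentially the same constraint-qualification (MFCQ-type) proof found there, so there is nothing to flag beyond the routine technicality that $x_n$ in the first inclusion should be chosen within, say, $t_n^2$ of the infimal distance.
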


\subsection{Passivity and Observability}
The following is from \cite{porco2019}. Consider a system
\vspace{-0.2cm}
\begin{align} \label{eq:EIPsys}
\Xi:\begin{cases}
\dot x = f(x) +Gu\\
y = h(x)
\end{cases}
\end{align}
where $G$ is full column-rank. Let $\mathcal{E}_{\Xi}$ denote the set of assignable equilibria of \eqref{eq:EIPsys}, and given $\bar x \in \mathcal{E}_{\Xi}$, let $\bar u = k_u(\bar x) := -(G^\top G)^{-1}G^\top f(\bar x)$ and $\bar y = h(\bar x)$ be the equilibrium input and output.
\begin{defn}
	The system \eqref{eq:EIPsys} is \emph{equilibrium-independent passive} if, for every equilibrium $\bar{x}$, there exists a continuously-differentiable storage function $V^{\bar{x}}:\reals^n \rightarrow \reals_{\geq 0}$ such that
	\vspace{-0.2cm}
	\begin{align} \label{eq:EIP}
	\nabla V^{\bar{x}}(x)^\top(f(x)+Gu) \leq (y-\bar y)^\top (u-\bar u)
	\end{align}
	where $\bar u$ and $\bar y$ are the steady-state input and output at $\bar x$. A set of storage functions $\{V^{\bar x}(x), \bar x \in \mathcal{E}_{\Xi}\}$ satisfying \eqref{eq:EIP} is an \emph{EIP storage function family}.
\end{defn}

\begin{defn}
	The system \eqref{eq:EIPsys} is \emph{equilibrium-independent observable} if, for every $\bar x \in \mathcal{E}_{\Xi}$ with associated equilibrium input/output vectors $\bar u = k_u(\bar x)$ and $\bar y = h(\bar x)$, no trajectory of $\dot x = f(x) + G\bar u$
	can remain within the set $\{x \in \reals^n: h(x) = \bar y\}$ other
	than the equilibrium trajectory $x(t) = \bar x$.
\end{defn}

\begin{lemma} \label{lemma:EqUnique}
	If the system \eqref{eq:EIPsys} is equilibrium-independent observable then for a given equilibrium I/O pair $(\bar u, \bar y)$, there is exactly one $\bar x \in \mathcal{E}_\Sigma$ satisfying $\bar u = k_u(\bar x)$ and $\bar y = h(\bar x)$.
\end{lemma}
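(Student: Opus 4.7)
The plan is a short proof by contradiction that directly invokes the definition of equilibrium-independent observability. Suppose two distinct points $\bar x_1 \neq \bar x_2$ in $\mathcal{E}_\Xi$ both yield the same I/O pair, i.e.\ $k_u(\bar x_i) = \bar u$ and $h(\bar x_i) = \bar y$ for $i=1,2$. I would then derive a contradiction by constructing a non-equilibrium trajectory of the closed-loop system $\dot x = f(x) + G\bar u$ that nevertheless stays forever inside the level set $\{x : h(x) = \bar y\}$, which is exactly what observability forbids.

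First I would verify that both $\bar x_1$ and $\bar x_2$ are equilibria of $\dot x = f(x) + G\bar u$. Since each $\bar x_i$ is an assignable equilibrium, there exists $u_i$ with $f(\bar x_i) + G u_i = 0$, so $f(\bar x_i) \in \image(G)$. Because $G$ has full column rank, the left-inverse $(G^\top G)^{-1} G^\top$ recovers $u_i$ uniquely, giving $u_i = -(G^\top G)^{-1} G^\top f(\bar x_i) = k_u(\bar x_i) = \bar u$. Hence $f(\bar x_i) + G\bar u = 0$ for $i = 1,2$.

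Now apply the definition of equilibrium-independent observability at $\bar x_1$, with equilibrium input/output $(\bar u, \bar y) = (k_u(\bar x_1), h(\bar x_1))$. The constant curve $x(t) \equiv \bar x_2$ is a trajectory of $\dot x = f(x) + G\bar u$ by the previous step, and $h(\bar x_2) = \bar y$ means it lies in $\{x : h(x) = \bar y\}$ for all $t \geq 0$. Equilibrium-independent observability then forces this trajectory to coincide with the equilibrium trajectory $x(t) \equiv \bar x_1$, yielding $\bar x_2 = \bar x_1$, contradicting our assumption.

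There is really no hard step here; the only subtlety is justifying that $k_u(\bar x_i) = \bar u$ actually makes $\bar x_i$ an equilibrium of $\dot x = f(x) + G\bar u$, which requires using the assignability of $\bar x_i \in \mathcal{E}_\Xi$ together with the full column rank of $G$ to invert $G^\top G$. Once that observation is in hand, the observability hypothesis does the rest in a single line.
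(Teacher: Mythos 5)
Your proof is correct. The paper itself gives no argument for this lemma (it is imported as background from the cited reference on equilibrium-independent passivity), but your reasoning is exactly the standard one: full column rank of $G$ makes $k_u(\bar x_i)$ the unique input rendering $\bar x_i$ an equilibrium, so both candidate states are equilibria of $\dot x = f(x)+G\bar u$, and the constant trajectory at $\bar x_2$ then violates equilibrium-independent observability at $\bar x_1$ unless $\bar x_2=\bar x_1$. No gaps.
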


\subsection{Graph Theory}
The following is from \cite{gr01}. An undirected graph $\mathcal{G} = (\mathcal{I},\mathcal{E})$ is a set of vertices, $\mathcal{I} = \{1,\dots,N\}$, and edges, $\mathcal{E}\subset \mathcal{I}\times \mathcal{I}$. $(i,j)\in \mathcal{E}$ means that vertex $j$ can receive information from vertex $i$. Let $\mathcal{G}$ be assumed to be undirected so that $(i,j)\in \mathcal{E}$ if and only if $(j,i)\in \mathcal{E}$ for all $i,j \in \mathcal{I}$. The adjacency matrix $\boldsymbol{A} = [a_{ij}]\in \reals^{N\times N}$ of $\mathcal{G}$ is defined by $a_{ij} = 1$ if $(j,i)\in\mathcal{E}$ and $a_{ij}=0$ otherwise. Since $\mathcal{G}$ is undirected $\boldsymbol{A}=\boldsymbol{A}^\top$. $\mathcal{G}$ is connected if given any $i,j\in \mathcal{I}$, there is a path connecting them. Let $\mathcal{N}_i$ denote the neighbours of vertex $i$ and let $D = \diag(|\mathcal{N}_i|)_{i\in\mathcal{I}}$. The Laplacian of $\mathcal{G}$ is defined as $L = D-\boldsymbol{A}$.

\section{Problem Formulation} \label{sec:problem}
In this work, we consider a set of $N$ agents (players) $\mathcal{I}=\{1,\dots,N\}$ in a generalized game. Each agent controls its action $y_i\in \reals^{m_i}$ and attempts to minimize a cost function $\mathcal{J}_i$ subject to shared, (possibly) coupled inequality constraints $g(y)\leq 0$, $g:\reals^m \rightarrow \reals^p$, where $y = \col(y_i)_{i\in \mathcal{I}}\in \reals^m$, $m=\sum_{i\in \mathcal{I}}m_i$. This gives the following for each $i\in \mathcal{I}$:
\vspace{-0.2cm}
\begin{align} \label{eq:GNE}
\begin{split}
\min_{y_i}\quad &\mathcal{J}_i(y_i,y_{-i})\\
\st\quad  &g(y)\leq 0 
\end{split}
\end{align}
where $y_{-i} = \col(y_1,\dots,y_{i-1},y_{i+1},\dots,y_N)$, all agents actions except for agent $i$'s.
\begin{defn}
	A \emph{generalized Nash equilibrium} (GNE) of \eqref{eq:GNE} is a strategy profile $y^*$ satisfying
	\vspace{-0.2cm}
	\begin{align*}
	y_i^* \in \argmin_{z_i} \mathcal{J}_i (z_i,y^*_{-i})\ \text{s.t.} \ g(z_i,y^*_{-i})\leq 0, \quad \forall i \in \mathcal{I} 
	\end{align*}
\end{defn}

\begin{asmp} \label{asmp:cost}
	The cost function of each agent, $\mathcal{J}_i(y_i,y_{-i})$, is convex and continuously-differentiable in $y_i$, for each fixed $y_{-i}$, $g(y)$ is component-wise convex and $C^2$ in $y$, either all $g_\ell(y)$ are affine or at least one $g_\ell(y)$ is strictly-convex, where $g_\ell(y)$ is the $\ell^{\text{th}}$ component of $g(y)$, and the feasible set $\Omega=\{y\in \reals^m:g(y)\leq 0\}$ is non-empty, convex, compact and satisfies Slater's constraint qualification. 
\end{asmp}

Let the stacked vector of partial gradients of all cost functions, called the pseudo-gradient, be denoted as
\vspace{-0.2cm}
\begin{align}
F(y) := \col(\nabla_i \mathcal{J}_i (y_i,y_{-i}))_{i \in \mathcal{I}},
\end{align}
where $\nabla_i \mathcal{J}_i (y_i,y_{-i})\!=\! \frac{\partial \mathcal{J}_i (y_i,y_{-i})}{\partial y_i}^\top\!$.

A specific type of GNE is the so-called variational-GNE (vGNE). Under Assumption \ref{asmp:cost}, from Theorem 4.8 in \cite{facchineiKanzow2010}, $y^*$ is a vGNE if and only if there exist a dual variable $\lambda^*\in \reals^p$ such that the KKT conditions hold:\vspace{-0.2cm}
\begin{align} \label{eq:vGNE}
\begin{split}
&F(y^*)+ \sum_{\ell=1}^p \lambda_\ell^*\nabla g_\ell(y^*) = 0\\
&\lambda_\ell^* \geq 0\\
&-g_\ell(y^*) \geq 0, \qquad \forall \ell =1,\dots,p\\
&-\lambda_\ell^*g_\ell(y^*) = 0
\end{split}
\end{align}
A vGNE has the interpretation of no price-discrimination among the agents, that is, all agents are penalized equally for constraint violation.
\begin{asmp} \label{asmp:pseudo}
	The pseudo-gradient $F(y)$ is Lipschitz continuous, i.e., $\|F(y)-F(y')\|\leq\theta_1\|y-y'\|$ for $\theta_1>0$, for all $y,y'\in \reals^m$, and is either
	\begin{enumerate}[a)]
		\item strictly-monotone, i.e., $(y\!-\!y')^\top\!(F(y)\!-\!F(y'))\!\!>0$ for all $y\neq y' \in \reals^m$, or
		\item strongly-monotone, i.e., $(y\!-\!y')^\top\!(F(y)\!-\!F(y'))\!\geq\!\frac{\mu}{2}\!\|y\!-\!y'\|^2$ for $\mu>0$ and for all $y,y' \in \reals^m$.
	\end{enumerate}
\end{asmp}
Assumptions \ref{asmp:cost} and \ref{asmp:pseudo} are standard assumptions that guarantee existence and uniqueness of a vGNE and are commonly used to show convergence, see, e.g., \cite{pavel2019} or \cite{bianchiGrammatico2019}. Under Assumptions \ref{asmp:cost} and \ref{asmp:pseudo}, by Cor. 2.2.5 and Thm. 2.3.3 from \cite{facchineiPang2007}, the GNE problem \eqref{eq:GNE} has a unique variational-GNE.

Our goal is to design a GNE seeking algorithm in continuous-time that satisfies the constraint $g(y)\leq0$ for all time. Inspired by interior-point methods, we consider solving problem \eqref{eq:GNE} using inexact penalty functions. Thus, consider transforming the GNE problem into an NE seeking problem given by the following set of unconstrained programs:\vspace{-0.2cm}
\begin{align} \label{eq:NE}
\min_{y_i}\quad &\mathcal{J}_i(y_i,y_{-i})+\phi(y), \quad \forall i \in \mathcal{I}
\end{align}
where $\phi(y)$ is the so-called log-barrier function
\vspace{-0.2cm}
\begin{align} \label{eq:logbarrier}
\phi(y) = \begin{cases}
-\rho \sum_{\ell=1}^p \log(-g_\ell(y)),\quad &g(y)<0\\
+\infty,&\text{else}
\end{cases} 
\end{align}
where $\rho>0$. Note that under Assumption \ref{asmp:cost}, $\phi(y)$ is strictly-convex and $C^2$ in $y$ on $\interior \Omega = \{y \in \reals^m: g(y)<0 \}$ and thus $\nabla \phi(y)$ is locally Lipschitz on $\interior \Omega$.

Under Assumptions \ref{asmp:cost}-\ref{asmp:pseudo}, by Cor. 4.3 from \cite{bo99} and Thm. 3 from \cite{sfpp14}, the NE problem \eqref{eq:NE} has a unique solution $y^*$. Moreover, $y^*$ satisfies
\vspace{-0.2cm}
\begin{align} \label{eq:NEcondition}
F(y^*)+\nabla \phi(y^*) = 0
\end{align}
where $\nabla \phi$ is the gradient of $\phi$. Following from Section 11.2.2 in \cite{boyd_vandenberghe_2004}, since $\phi(y)$ is the log-barrier function, \eqref{eq:NEcondition} becomes
\vspace{-0.2cm}
\begin{align*}
F(y^*)+ \sum_{\ell=1}^p \frac{\rho}{-g_\ell(y^*)}\nabla g_\ell(y^*) = 0
\end{align*}
By letting $\lambda_\ell^*(\rho) := \frac{\rho}{-g_\ell(y^*)}$, we get the following conditions\vspace{-0.2cm}
\begin{align*}
&F(y^*)+ \sum_{\ell=1}^p \lambda_\ell^*(\rho)\nabla g_i(y^*) = 0\\
&\lambda_\ell^*(\rho) \geq 0\\
&-g_\ell(y^*) \geq 0, \qquad \quad \forall l =1,p\\
&-\lambda_\ell^*(\rho)g_\ell(y^*) = \rho
\end{align*}
and as $\rho\!\rightarrow\! 0$, we recover \eqref{eq:vGNE}. Thus the NE of \eqref{eq:NE} is an approximate vGNE of \eqref{eq:GNE}. Moreover, it is an $\varepsilon$-GNE of \eqref{eq:GNE}.

\begin{defn}
	A \emph{generalized} $\varepsilon$-\emph{Nash equilibrium} ($\varepsilon$-GNE) of \eqref{eq:GNE} is a strategy profile $y^*$ satisfying\vspace{-0.2cm}
	\begin{align*}
	\mathcal{J}_i(y^*_i,y^*_{-i})\!\leq \!\inf_{y_i} \{\mathcal{J}_i(y_i,y_{-i}^*)\!+\!\varepsilon : g(y_i,y_{-i}^*)\!\leq \!0 \}, \, \, \forall i \in \mathcal{I}
	\end{align*}
	for $\varepsilon>0$. When $\varepsilon=0$, $y^*$ is a GNE. Moreover, if the dual variables are the same for each agent, we call $y^*$ an $\varepsilon$-vGNE.
\end{defn}

\begin{lemma} \label{lemma:epsilon}
	The NE of \eqref{eq:NE} is an $\varepsilon$-vGNE of \eqref{eq:GNE} with $\varepsilon=p\rho$.
\end{lemma}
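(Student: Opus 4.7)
The plan is to exploit the KKT-like conditions that the NE $y^*$ of \eqref{eq:NE} satisfies, which were already written down just before the statement:
\begin{align*}
F(y^*) &+ \textstyle\sum_{\ell=1}^p \lambda_\ell^*(\rho)\nabla g_\ell(y^*)=0, \quad \lambda_\ell^*(\rho) \geq 0,\\
-g_\ell(y^*) &\geq 0, \quad -\lambda_\ell^*(\rho)g_\ell(y^*)=\rho, \quad \ell=1,\dots,p.
\end{align*}
In particular, $g_\ell(y^*)<0$, so $y^*$ is strictly feasible for \eqref{eq:GNE}, and $\lambda_\ell^*(\rho)>0$. The first condition, read component-wise in $y_i$, says
\[
\nabla_i \mathcal{J}_i(y_i^*,y_{-i}^*) + \textstyle\sum_{\ell=1}^p \lambda_\ell^*(\rho)\nabla_i g_\ell(y_i^*,y_{-i}^*) = 0,
\]
which is the first-order optimality condition for the unconstrained convex program
$\min_{y_i} \mathcal{J}_i(y_i,y_{-i}^*) + \sum_{\ell=1}^p \lambda_\ell^*(\rho)\, g_\ell(y_i,y_{-i}^*)$.
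Convexity comes from Assumption \ref{asmp:cost} together with $\lambda_\ell^*(\rho)\geq 0$, so first-order optimality is sufficient for global optimality in $y_i$.

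Next I would pick an arbitrary $\tilde y_i$ feasible for agent $i$'s best-response problem, i.e., satisfying $g(\tilde y_i,y_{-i}^*)\leq 0$, and compare the Lagrangian values at $\tilde y_i$ and $y_i^*$. Global optimality gives
\[
\mathcal{J}_i(y_i^*,y_{-i}^*) + \textstyle\sum_{\ell} \lambda_\ell^*(\rho) g_\ell(y^*) \;\leq\; \mathcal{J}_i(\tilde y_i,y_{-i}^*) + \sum_{\ell} \lambda_\ell^*(\rho) g_\ell(\tilde y_i,y_{-i}^*).
\]
Feasibility of $\tilde y_i$ and nonnegativity of $\lambda_\ell^*(\rho)$ imply the right-hand sum is $\leq 0$, so it can be dropped. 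On the left, the complementarity-like relation $-\lambda_\ell^*(\rho) g_\ell(y^*) = \rho$ applied component-wise yields $\sum_{\ell=1}^{p}\lambda_\ell^*(\rho) g_\ell(y^*) = -p\rho$. Substituting,
\[
\mathcal{J}_i(y_i^*,y_{-i}^*) \;\leq\; \mathcal{J}_i(\tilde y_i,y_{-i}^*) + p\rho.
\]

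Finally, taking the infimum over all $\tilde y_i$ with $g(\tilde y_i,y_{-i}^*)\leq 0$ gives exactly the inequality defining an $\varepsilon$-GNE with $\varepsilon=p\rho$, and since the dual multipliers $\lambda^*(\rho)$ are common to all agents, it is in fact an $\varepsilon$-vGNE. I do not anticipate a real obstacle here; the argument is essentially weak duality applied inside a best-response problem. The only point that warrants a one-line remark is that strict feasibility of $y^*$ (hence $\lambda_\ell^*(\rho)$ finite) is automatic from the log-barrier form of $\phi$, so the manipulations above are well-defined.
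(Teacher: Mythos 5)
Your proof is correct and follows essentially the same route as the paper: both use the barrier-induced multipliers $\lambda_\ell^*(\rho)=\rho/(-g_\ell(y^*))$, the fact that $y_i^*$ minimizes agent $i$'s Lagrangian at $\lambda^*(\rho)$, and weak duality together with $\sum_{\ell}\lambda_\ell^*(\rho)g_\ell(y^*)=-p\rho$ to get the $p\rho$ gap. The only difference is cosmetic: the paper invokes the dual function and cites weak duality, while you prove that inequality inline via the Lagrangian comparison.
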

\begin{proof}
	Consider the dual function, $h_i$, of \eqref{eq:GNE} for agent $i$ evaluated at the NE of $\eqref{eq:NE}$ $y^*$ and $\lambda^*(\rho)$ as above,\vspace{-0.2cm}
	\begin{align*}
	h_i(\lambda^*(\rho)) &= \mathcal{J}_i(y_i^*,y_{-i}^*)+\sum_{\ell=1}^{p}\lambda_\ell^*(\rho)g_\ell (y^*)\\
	&= \mathcal{J}_i(y_i^*,y_{-i}^*)-p\rho
	\end{align*}
	By properties of the dual function, Section 5.1.3 in \cite{boyd_vandenberghe_2004},\vspace{-0.2cm}
	\begin{align*}
	\mathcal{J}_i(y_i^*,y_{-i}^*)-p\rho \leq \inf_{y_i} \{\mathcal{J}_i(y_i,y_{-i}^*):g(y_i,y_{-i}^*)\leq 0 \}
	\end{align*}
	Therefore,\vspace{-0.2cm}
	\begin{align*}
	\mathcal{J}_i(y_i^*,y_{-i}^*) \leq \inf_{y_i} \{\mathcal{J}_i(y_i,y_{-i}^*)+p\rho:g(y_i,y_{-i}^*)\leq 0 \}
	\end{align*}
	Since this holds for all $i\in\mathcal{I}$, $y^*$ is an $\varepsilon$-GNE of \eqref{eq:GNE} with $\varepsilon=p\rho$. Moreover, the dual variables, $\lambda_\ell^*(\rho)$ are the same for all agents. Therefore $y^*$ is an $\varepsilon$-vGNE.
\end{proof}

\section{Full-Information Gradient Feedback} \label{sec:nonlinearFullInfo}
We consider GNE seeking for a class of passive nonlinear agents with dynamics of the form
\vspace{-0.2cm}
\begin{align} \label{eq:Plant}
\mathcal{P}_i:\begin{cases}
\dot x_i = f_i(x_i) +G_iu_i\\
y_i = G_i^\top \nabla V_i(x_i).
\end{cases}
\end{align}
where $x_i \in \reals^{n_i}$ is the state of agent $i$, $u_i\in \reals^{m_i}$ is agent $i$'s input, and the \textit{output} $y_i\in \reals^{m_i}$ of $\mathcal{P}_i$ \textit{is the action of agent} $i$. In this section, we assume that each agent $i$ has full-knowledge of all other agents' actions, $y_{-i}$, but no knowledge of their states, $x_j$, or cost functions, $\mathcal{J}_j$, for $j\neq i$. This is what we call the \emph{full-(decision) information} case.
\begin{asmp} \label{asmp:nonlinearPlant}
	$\mathcal{P}_i$ satisfies the following:
	\vspace{-0.1cm}
	\begin{enumerate}[a)]
		\item $f_i(x_i)$ is Lipschitz continuous,
		\item \eqref{eq:Plant} is equilibrium independent observable,
		\item $G_i$ is full column-rank,
		\item there exists a map $\pi_i:\reals^{m_i}\rightarrow \reals^{n_i}$ that solves the regulator equations for any $\bar y_i \in \reals^{m_i}$\vspace{-0.2cm}
		\begin{align*}
		0 &= f_i(\pi_i(\bar y_i))\\
		0&= G_i^\top\nabla V_i(x_i)-\bar y_i,
		\end{align*}
		i.e., any $\bar y_{i} \in \reals^{m_i}$ is an equilibrium output of \eqref{eq:Plant} with equilibrium input $\bar u_{i}=0$.
		\item $V_i(x_i)$ is a strongly convex function with $\nabla V_i(x_i)$ Lipschitz continuous and the mapping $-f_i \circ \nabla V_i^{-1}$ is monotone.
	\end{enumerate}
\end{asmp}
\begin{remark}
	Under Assumption \ref{asmp:nonlinearPlant}, by Corollary 3.6 in \cite{porco2019}, \eqref{eq:Plant} is equilibrium-independent passive with storage function family $\{V^{\bar x}(x), \bar x \in \mathcal{E}_{\mathcal{P}_i}\}$, where \vspace{-0.2cm}
	\begin{align*}
	V_i^{\bar x_i}(x_i) := V_i(x_i)-V(\bar x_i) - \nabla V_i(\bar x_i)^\top(x_i-\bar x_i).
	\end{align*}
\end{remark}

Three examples of systems that can meet Assumption \ref{asmp:nonlinearPlant} are:
\begin{enumerate}[i)]
	\item Integrators\vspace{-0.2cm}
	\begin{align} \label{eq:int}
	\dot y_i = u_i,
	\end{align}
	\item PI controllers in cascade with certain stable linear systems\vspace{-0.2cm}
	\begin{align} \label{eq:cascadePI}
	\begin{split}
	&\dot x_i = \begin{bmatrix} -v_iI&k_i I\\0&0\end{bmatrix}x_i+\begin{bmatrix}I\\I\end{bmatrix}u_i\\
	&y_i = \begin{bmatrix} I& 0\end{bmatrix}x_i,
	\end{split}
	\end{align}
	where $0<k_i<v_i$, and
\end{enumerate}

By stacking the plant dynamics \eqref{eq:Plant}, we get\vspace{-0.2cm}
\begin{align} \label{eq:PlantStacked}
\mathcal{P}:\begin{cases}
\dot x = f(x) +Gu\\
y = G^\top \nabla V(x).
\end{cases}
\end{align}
where $x\! = \!\col(x_i)_{i \in \mathcal{I}}$,  $u\! =\! \col(u_i)_{i \in \mathcal{I}}$,  $y\! =\! \col(y_i)_{i \in \mathcal{I}}$, $f(x) \!=\! \col(f_i(x_i))_{i \in \mathcal{I}}$, $G\! =\! \blkdiag(G_i)_{i \in \mathcal{I}}$ and $\nabla V \!=\! \col(\nabla V_i(x_i))_{i \in \mathcal{I}}$. 

\begin{remark}
It can be easily verified that \eqref{eq:PlantStacked} satisfies
	\begin{enumerate}[a)]
	\item $f(x)$ is Lipschitz continuous,
	\item \eqref{eq:PlantStacked} is equilibrium-independent observable,
	\item $G$ is full column-rank,
	\item the map $\pi:\reals^{m}\rightarrow \reals^{n}$, $\pi(x) = \col(\pi_i(x_i))_{i \in \mathcal{I}}$ solves the regulator equations for any $\bar y \in \reals^{m}$\vspace{-0.2cm}
	\begin{align*}
	0 &= f(\pi(\bar y))\\
	0&= G^\top\nabla V(x)-\bar y,
	\end{align*}
	i.e., any $\bar y \in \reals^{m}$ is a steady-state output of \eqref{eq:Plant} with steady-state input $\bar u=0$.
	\item \eqref{eq:PlantStacked} is equilibrium independent observable with strongly convex storage function family\vspace{-0.2cm}
	\begin{align*}
	V^{\bar x}(x) := V(x)-V(\bar x) - \nabla V(\bar x)^\top(x-\bar x)
	\end{align*}
	where $V(x) = \sum_{i \in \mathcal{I}}V_i(x_i)$.
\end{enumerate}
\end{remark}

We consider solving GNE seeking \eqref{eq:GNE} for \eqref{eq:PlantStacked} by converting the problem into \eqref{eq:NE}. In the full-(decision) information case, we consider a static partial-gradient feedback. Thus, agent $i$ with $\mathcal{P}_i$  \eqref{eq:Plant} takes\vspace{-0.2cm}
\begin{align} \label{eq:nonlinFIfeedback}
u_i = -\nabla_i \mathcal{J}_i(y)-\nabla_i \phi(y). 
\end{align}
where $\nabla_i \phi(y) = \frac{\partial \phi(y)}{\partial y_i}^\top$. For all agents, from $\mathcal{P}$ \eqref{eq:PlantStacked} with $u\!=-\!F(y)-\nabla \phi(y)$, this leads to an overall stacked dynamics, \vspace{-0.2cm}
\begin{align} \label{eq:FIdynamics}
\Sigma:\begin{cases}
\dot x = f(x)-G(F(y)+\nabla \phi(y))=:q(x)\\
y = G^\top \nabla V(x)
\end{cases}
\end{align}

\begin{remark} \label{rem:methods}
	In the \textit{absence of constraints (no penalty)}, the dynamics \eqref{eq:FIdynamics} capture a number of previously investigated NE seeking algorithms. If each $\mathcal{P}_i$ is taken to be an integrator \eqref{eq:int}, then gradient-play \vspace{-0.2cm}
	\begin{align*}
	\Sigma:\quad \dot y + F(y)=0
	\end{align*}
	is recovered \cite{gpTAC18}. If instead, each $\mathcal{P}_i$ is governed by \eqref{eq:cascadePI}, then the dynamics become the following second-order method investigated in the optimization literature \vspace{-0.2cm}
	\begin{align*}
	\Sigma:\quad \ddot{y}+\Big(\mathcal{V}+\frac{\partial F(y)}{\partial y}\Big)\dot y + KF(y)=0
	\end{align*}
	where $\mathcal{V}=\blkdiag(v_iI)_{i\in \mathcal{I}}$ and $K=\blkdiag(k_iI)_{i\in \mathcal{I}}$. See, e.g., \cite{attouch2014}-\cite{bot2019}. 
\end{remark}

Next, we investigate the behaviour of \eqref{eq:FIdynamics} and show that not only do the solutions converge to an equilibrium corresponding to the output being the NE, but that the solutions satisfy the output constraints for all time. First, we investigate the equilibria of \eqref{eq:FIdynamics}.

\begin{lemma} \label{lemma:FIEq}
	Under Assumptions \ref{asmp:cost}, \ref{asmp:pseudo} and \ref{asmp:nonlinearPlant}, $x^* = \pi(y^*)$ is the unique equilibrium point of \eqref{eq:FIdynamics} with $y^*$ as in \eqref{eq:NEcondition}, the NE of \eqref{eq:NE} and $\varepsilon$-vGNE of \eqref{eq:GNE}.
\end{lemma}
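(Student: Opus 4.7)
The plan is to split the argument into existence and uniqueness, with the first part being essentially a direct verification via the regulator equations and the second part leaning on equilibrium-independent passivity together with strict monotonicity of $F+\nabla \phi$.

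For existence, I would substitute $x^{*}=\pi(y^{*})$ into the right-hand side of \eqref{eq:FIdynamics}. Assumption \ref{asmp:nonlinearPlant}(d) (the regulator equations for the stacked plant) gives $f(\pi(y^{*}))=0$ and $G^{\top}\nabla V(\pi(y^{*}))=y^{*}$, so the output at $x^{*}$ is exactly $y^{*}$. Since $y^{*}$ satisfies the NE optimality condition \eqref{eq:NEcondition}, the gradient feedback term $-G(F(y^{*})+\nabla \phi(y^{*}))$ vanishes. Thus $q(x^{*})=0$, and $x^{*}$ is an equilibrium whose output is the NE of \eqref{eq:NE}, which by Lemma \ref{lemma:epsilon} is an $\varepsilon$-vGNE of \eqref{eq:GNE}.

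For uniqueness, let $\bar x$ be any equilibrium of \eqref{eq:FIdynamics}, with associated output $\bar y = G^{\top}\nabla V(\bar x)$ and input $\bar u = -F(\bar y)-\nabla\phi(\bar y)$. Since $f(x^{*})=0$, the equilibrium input at $x^{*}$ is $k_u(x^{*})=0$, and the associated equilibrium output is $y^{*}$, so EIP applied with storage function $V^{x^{*}}$ (Remark after Assumption \ref{asmp:nonlinearPlant}) yields
\begin{align*}
\nabla V^{x^{*}}(\bar x)^{\top}\!\bigl(f(\bar x)+G\bar u\bigr)\;\le\;(\bar y-y^{*})^{\top}(\bar u-0).
\end{align*}
The left-hand side is zero because $\bar x$ is an equilibrium, so
\begin{align*}
0\;\le\;-(\bar y-y^{*})^{\top}\bigl(F(\bar y)+\nabla\phi(\bar y)\bigr).
\end{align*}
On the other hand, strict convexity of $\phi$ on $\interior\Omega$ makes $\nabla\phi$ strictly monotone; combined with monotonicity of $F$ from Assumption \ref{asmp:pseudo}, $F+\nabla\phi$ is strictly monotone. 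Using the NE condition $F(y^{*})+\nabla\phi(y^{*})=0$, strict monotonicity gives
\begin{align*}
(\bar y-y^{*})^{\top}\bigl(F(\bar y)+\nabla\phi(\bar y)\bigr)\;\ge\;0,
\end{align*}
with strict inequality unless $\bar y=y^{*}$. The two inequalities together force $\bar y=y^{*}$, hence $\bar u=0$. Then $(\bar u,\bar y)=(0,y^{*})$ is the same equilibrium I/O pair as that of $\pi(y^{*})$, and equilibrium-independent observability of $\mathcal{P}$ together with Lemma \ref{lemma:EqUnique} forces $\bar x=\pi(y^{*})=x^{*}$.

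The main obstacle I anticipate is subtle rather than computational: justifying that $x^{*}$ is genuinely an assignable equilibrium of $\mathcal{P}$ so that the EIP inequality can be invoked with $V^{x^{*}}$, and ensuring that the characterization $\bar u=-(G^{\top}G)^{-1}G^{\top}f(\bar x)$ of the equilibrium input is consistent with the closed-loop equation $f(\bar x)+G\bar u=0$ (which follows because $G$ is full column-rank). Once this bookkeeping is in place, the strict monotonicity pinch delivers uniqueness immediately.
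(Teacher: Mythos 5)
Your proof is correct, and the existence part coincides with the paper's. For uniqueness, however, you take a slightly different route: you invoke the equilibrium-independent passivity inequality with storage function $V^{x^*}$, evaluated at the putative second equilibrium $\bar x$ with its closed-loop input $\bar u=-F(\bar y)-\nabla\phi(\bar y)$, so that the left-hand side vanishes and you obtain $(\bar y-y^*)^\top\bigl(F(\bar y)+\nabla\phi(\bar y)\bigr)\le 0$; pinching this against strict monotonicity of $F+\nabla\phi$ together with \eqref{eq:NEcondition} forces $\bar y=y^*$, and Lemma \ref{lemma:EqUnique} (equilibrium-independent observability) then gives $\bar x=x^*$. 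The paper instead works at the level of the primitive Assumption \ref{asmp:nonlinearPlant}(e): it sets $w=\nabla V(\bar x)$, uses invertibility of $\nabla V$ (strong convexity plus Lipschitz gradient), subtracts the two equilibrium equations, multiplies by $(w-w^*)^\top$, and sandwiches the result between the monotonicity of $-f\circ\nabla V^{-1}$ and that of $F+\nabla\phi$ before concluding exactly as you do. The two arguments are equivalent in substance—the EIP inequality you cite is itself derived (via Corollary 3.6 of the referenced work) from the same monotonicity of $-f\circ\nabla V^{-1}$—but yours buys brevity by reusing the already-established EIP property as a black box, while the paper's is more self-contained in terms of the stated assumptions. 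Your flagged bookkeeping points (that $x^*$ is an assignable equilibrium since $f(\pi(y^*))=0$ with $\bar u=0$, and that $\bar u=-(G^\top G)^{-1}G^\top f(\bar x)$ is consistent with $f(\bar x)+G\bar u=0$ by full column rank of $G$) are handled correctly, so no gap remains.
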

\begin{proof}
See Appendix \ref{appendix:FIEq}.
\end{proof}

In order to show that the constraints are satisfied for all time, we look at positive-invariance of sub-level sets of the form $S\!=\!\{x\!\in\! \reals^n: V^{x^*}(x)\!-\!c\! \leq\! 0, \phi^{x^*}(x)\!-\!d\! \leq\! 0\}$ for $c,d>0$, where $x^*\! =\! \pi(y^*)$ and $\phi^{x^*}(x) \!:=\! \phi(G^\top \nabla V(x))\!-\!\phi(G^\top \nabla V(x^*))\!-\!\nabla \phi(G^\top \nabla V(x^*))^\top(G^\top\nabla V(x)\!-\!G^\top \nabla V(x^*))$, related to the Bregman divergence of $\phi(y)$. The interior of these sets corresponds to the output satisfying the constraints. The following lemma shows that $S$ is a practical set.
\begin{lemma} \label{lemma:practicalSet}
	Under Assumption \ref{asmp:nonlinearPlant}, $S=\{x\in \reals^n: V^{x^*}(x)-c \leq 0, \phi^{x^*}(x)-d \leq 0\}$ is a compact, practical set for all $c,d>0$.
\end{lemma}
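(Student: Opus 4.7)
The plan is to verify compactness and the three items of Definition \ref{def:practicalSet} for $h_1(x) = V^{x^*}(x) - c$ and $h_2(x) = \phi^{x^*}(x) - d$ on the open set $\mathcal{O} = \{x \in \reals^n : g(G^\top \nabla V(x)) < 0\}$, on which $\phi^{x^*}$ is well-defined and $C^1$. Compactness would follow quickly: strong convexity of $V$ (Assumption \ref{asmp:nonlinearPlant}(e)) yields $V^{x^*}(x) \geq \tfrac{\mu}{2}\|x-x^*\|^2$, so $\{V^{x^*} \leq c\}$ is bounded, while the log-barrier makes $\phi^{x^*}(x) \to +\infty$ as $x \to \partial \mathcal{O}$, so $\{\phi^{x^*} \leq d\}$ sits at strictly positive distance inside $\mathcal{O}$. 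The intersection $S$ is therefore closed and bounded in $\reals^n$, hence compact. Condition (1) is then immediate from the regularity of $V$ and $\phi$ on $\mathcal{O}$ and $\interior \Omega$ respectively.

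For condition (2), at any $x \in S$ I would take the preconditioned direction $z(x) = -\epsilon\,[\nabla^2 V(x)]^{-1}(\nabla V(x) - \nabla V(x^*))$ with $\epsilon > 0$ small. Using $\nabla h_1(x) = \nabla V(x) - \nabla V(x^*)$ and $\nabla h_2(x) = \nabla^2 V(x)\,G\,(\nabla\phi(y) - \nabla\phi(y^*))$, the Hessian cancels against its inverse in both inner products, leaving
\[
\nabla h_1^\top z = -\epsilon\,\|\nabla V(x) - \nabla V(x^*)\|_{[\nabla^2 V(x)]^{-1}}^2, \qquad \nabla h_2^\top z = -\epsilon\,(\nabla\phi(y) - \nabla\phi(y^*))^\top(y - y^*).
\]
Both are $\leq 0$, strictly so whenever $x \neq x^*$ (strong convexity makes $\nabla V$ injective) and $y \neq y^*$ (strict convexity of $\phi$) respectively. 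On $\partial S$, an active constraint forces $V^{x^*}(x) = c > 0$ or $\phi^{x^*}(x) = d > 0$, which in turn forces $x \neq x^*$ or $y \neq y^*$, giving strict negativity where needed; for inactive constraints, the required $h_k(x) + \nabla h_k^\top z < 0$ is preserved by shrinking $\epsilon$, which I am free to do since $z$ may depend on $x$.

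The main obstacle is condition (3), which demands a \emph{single} Lipschitz vector field $f_0$ decreasing every active constraint on $\partial S$ simultaneously. The obvious guess $f_0(x) = x^* - x$ handles $h_1$ via convexity of $V^{x^*}$, but the analogous descent for $h_2$ is not apparent because $\phi^{x^*}(x)$ is a Bregman divergence of $\phi$ pulled back through the nonlinear map $G^\top \nabla V$ and need not be convex in $x$. The plan is to reuse the preconditioned field $f_0(x) = -[\nabla^2 V(x)]^{-1}(\nabla V(x) - \nabla V(x^*))$; the same cancellation produces $L_{f_0} h_1 = -\|\nabla V(x) - \nabla V(x^*)\|_{[\nabla^2 V(x)]^{-1}}^2$ and $L_{f_0} h_2 = -(\nabla\phi(y) - \nabla\phi(y^*))^\top(y - y^*)$, each strictly negative on $\partial S$ wherever the corresponding constraint is active, by the same injectivity and strict-convexity arguments. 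Lipschitz continuity of $f_0$ on a neighbourhood of the compact set $S$ then follows from uniform positive-definiteness of $\nabla^2 V$ (so $[\nabla^2 V]^{-1}$ stays bounded) together with Lipschitzness of $\nabla V$; a standard smooth truncation outside this neighbourhood yields a globally Lipschitz extension, completing the verification.
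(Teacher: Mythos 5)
Your overall structure matches the paper's (compactness from the bounded $V^{x^*}$-sublevel set plus closedness of the barrier sublevel set, then verifying Conditions 2 and 3 of Definition~\ref{def:practicalSet}), but your constructions differ in an interesting way. For Condition 2 the paper argues case-by-case with two different directions, $z=-G\bigl(\nabla\phi(y)-\nabla\phi(y^*)\bigr)$ when the barrier constraint is active and $z=-a\bigl(\nabla V(x)-\nabla V(x^*)\bigr)$ with $a$ small when only the $V$-constraint is active; your single preconditioned direction $z=-\epsilon[\nabla^2V(x)]^{-1}(\nabla V(x)-\nabla V(x^*))$ covers all cases at once, since both directional terms reduce to monotonicity pairings $-\epsilon\|\nabla V(x)-\nabla V(x^*)\|^2_{[\nabla^2V(x)]^{-1}}$ and $-\epsilon(\nabla\phi(y)-\nabla\phi(y^*))^\top(y-y^*)$, strictly negative exactly when the corresponding constraint is active (because $V^{x^*}(x)=c>0$ forces $x\neq x^*$ and $\phi^{x^*}(x)=d>0$ forces $y\neq y^*$). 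For Condition 3 the paper avoids the inverse Hessian altogether and takes $f_0(x)=(\phi^{x^*}(x)-d)(\nabla V(x)-\nabla V(x^*))-G(\nabla\phi(y)-\nabla\phi(y^*))$: the scalar weight $(\phi^{x^*}(x)-d)$ vanishes when the barrier constraint is active (so $L_{f_0}\phi^{x^*}=-\eta^\top G^\top\nabla^2V(x)G\eta<0$) and supplies the strict decrease of $V^{x^*}$ when it is not, handling the possibility that $y=y^*$ while $x\neq x^*$. Your field handles that same possibility through the Hessian cancellation, which is conceptually cleaner, but it buys this at a regularity cost you gloss over: Lipschitz continuity of $f_0(x)=-[\nabla^2V(x)]^{-1}(\nabla V(x)-\nabla V(x^*))$ does \emph{not} follow from boundedness of $[\nabla^2V(x)]^{-1}$ plus Lipschitzness of $\nabla V$; you need $x\mapsto\nabla^2V(x)$ itself to be (locally) Lipschitz, which Assumption~\ref{asmp:nonlinearPlant}(e) does not provide (it would under the $C^3$ regularity of Assumption~\ref{asmp:C2}, used only later in the paper). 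The paper's Hessian-free field sidesteps this, being $C^1$ on $\Omega_x$ and hence Lipschitz on the compact set $S$ under the section's standing assumptions; so either strengthen the regularity you invoke or switch to a field of that type for Condition 3.
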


\begin{proof}
See Appendix \ref{appendix:practicalSet}.
\end{proof}

\begin{remark}
	With LTI agents as in \cite{romanopavel2020}, sets of the form $S$ are convex with non-empty interior. Thus practicality follows trivially. In potential games, as in \cite{fabiani2019}, the sublevel sets of the potential function are positively invariant under gradient feedback and thus constraint satisfaction also follows trivially.
\end{remark}

\begin{lemma} \label{lemma:FI}
	Under Assumptions \ref{asmp:cost}, \ref{asmp:pseudo}(a) and \ref{asmp:nonlinearPlant}, the set $\interior \Omega=\{y:g(y)<0\}$ is output-positively-invariant for the dynamics \eqref{eq:FIdynamics}. That is, for all $x(0)$ such that $y(0)\in\interior\Omega$, $y(t)\in \interior\Omega$ for all $t\geq0$, i.e., the output constraints are satisfied for all time.
\end{lemma}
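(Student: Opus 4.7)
The plan is to apply the practical set machinery of Lemmas~\ref{thm:invariant}--\ref{thm:tangent} to the set $S$ from Lemma~\ref{lemma:practicalSet}: I will show that, for an appropriate choice of $c,d>0$, the set $S=\{x:V^{x^*}(x)\leq c,\,\phi^{x^*}(x)\leq d\}$ is positively invariant under $\dot x=q(x)$, contains $x(0)$, and lies inside $\{x: G^\top\nabla V(x)\in\interior\Omega\}$. Since $y(0)\in\interior\Omega$, both $V^{x^*}(x(0))$ and $\phi^{x^*}(x(0))$ are finite, so I can pick $c>V^{x^*}(x(0))$ first and then $d$ large (in particular exceeding $\phi^{x^*}(x(0))$); on $S$ the bound $\phi^{x^*}(x)\leq d$ combined with boundedness of $y$ forces $\phi(y)<\infty$, i.e., $y\in\interior\Omega$, so positive invariance of $S$ will complete the proof.

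By Lemmas~\ref{thm:invariant} and~\ref{thm:tangent}, positive invariance of $S$ reduces to verifying the tangent-cone inequalities $L_qV^{x^*}(x)\leq 0$ on $\{V^{x^*}=c\}$ and $L_q\phi^{x^*}(x)\leq 0$ on $\{\phi^{x^*}=d\}$. The first is the easy direction: by EIP of $\mathcal{P}$ with equilibrium I/O pair $(\bar u,\bar y)=(0,y^*)$ and feedback $u=-(F(y)+\nabla\phi(y))$, together with the NE characterization $F(y^*)+\nabla\phi(y^*)=0$ from Lemma~\ref{lemma:FIEq},
\begin{align*}
L_qV^{x^*}(x) \leq -(y-y^*)^\top(F(y)-F(y^*)) - (y-y^*)^\top(\nabla\phi(y)-\nabla\phi(y^*)) \leq 0,
\end{align*}
the last inequality following from (strict) monotonicity of $F$ (Assumption~\ref{asmp:pseudo}(a)) and convexity of $\phi$.

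The delicate step is the tangent inequality on the $\phi^{x^*}=d$ face, since $\phi^{x^*}$ is not itself an EIP storage function. Setting $w:=\nabla\phi(y)-\nabla\phi(y^*)$ and $M(x):=G^\top\nabla^2V(x)G$, direct differentiation along $q(x)=f(x)-G(F(y)+\nabla\phi(y))$ gives
\begin{align*}
L_q\phi^{x^*}(x) = -w^\top M(x)w + w^\top G^\top\nabla^2V(x)f(x) - w^\top M(x)\bigl(F(y)+\nabla\phi(y^*)\bigr).
\end{align*}
I would restrict to the compact set $\{V^{x^*}\leq c\}$ (compactness follows from the strong convexity in Assumption~\ref{asmp:nonlinearPlant}(e)), on which $\lambda_{\min}(M(x))$ has a uniform positive lower bound while $\|F(y)\|$ and $\|G^\top\nabla^2V(x)f(x)\|$ are uniformly bounded; then the negative quadratic $-w^\top M(x)w$ dominates the right-hand side whenever $\|w\|$ exceeds a threshold depending only on $c$.

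The main obstacle is the final quantitative estimate: $\|w\|$ must grow fast enough in $\phi^{x^*}$ to uniformly swamp the drift terms on $\{V^{x^*}\leq c\}$. This rests on the log-barrier structure, since $\|\nabla\phi(y)\|$ blows up like $\rho/|g_\ell(y)|$ while $\phi(y)$ grows only like $-\rho\log|g_\ell(y)|$, so $\|w\|$ is in fact exponential in $\phi^{x^*}$. Consequently, picking $d$ sufficiently large places $\{\phi^{x^*}=d\}\cap\{V^{x^*}\leq c\}$ entirely in the region where $\|w\|$ exceeds the required threshold, yielding $L_q\phi^{x^*}(x)\leq 0$. Lemmas~\ref{thm:invariant}, \ref{thm:tangent}, and~\ref{lemma:practicalSet} then give positive invariance of $S$, and the finiteness of $\phi^{x^*}$ on $S$ implies $y(t)\in\interior\Omega$ for all $t\geq 0$.
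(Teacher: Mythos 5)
Your proposal has the same architecture as the paper's proof: the same set $S$, the same practical-set/tangent-cone machinery (Lemmas \ref{thm:invariant}, \ref{thm:tangent} and \ref{lemma:practicalSet}), the same handling of the face $\{V^{x^*}=c\}$ via EIP together with monotonicity of $F$ and convexity of $\phi$, and the same domination argument on the face $\{\phi^{x^*}=d\}$ with the drift terms bounded on the compact set $\{V^{x^*}\le c\}$ (your Lie-derivative expansion in $w$ and $M(x)$ is equivalent to the paper's). The only genuine divergence is the key lower bound on $\|w\|=\|\nabla\phi(y)-\nabla\phi(y^*)\|$ on that face: you appeal to the log-barrier structure and claim $\|w\|$ grows exponentially in $\phi^{x^*}$, whereas the paper uses only convexity, via the Bregman-type inequality $\phi^{x^*}(x)\le(\nabla\phi(y)-\nabla\phi(y^*))^\top(y-y^*)\le \|w\|\,\|y-y^*\|\le c_1\theta_4\|w\|$, which already yields $\|w\|\ge d/(c_1\theta_4)$ — linear in $d$ and sufficient to choose $d\ge d_c$. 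Your route reaches the same conclusion but is heavier than needed and has one soft spot: the assertion that $\|\nabla\phi(y)\|$ blows up like $\rho/|g_\ell(y)|$ implicitly assumes the blowing-up terms $\tfrac{\rho}{-g_\ell(y)}\nabla g_\ell(y)$ do not cancel one another; ruling this out requires an extra argument, e.g.\ pairing $\nabla\phi(y)$ with the direction toward a Slater point $y_s$, where convexity gives $\nabla\phi(y)^\top(y-y_s)\ge -p\rho+\rho\,\delta/\min_\ell(-g_\ell(y))$ with $\delta=\min_\ell(-g_\ell(y_s))>0$, hence $\|\nabla\phi(y)\|\to\infty$ near $\partial\Omega$. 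The paper's convexity bound sidesteps this issue entirely and would apply to any convex penalty, not just the log-barrier; your argument, once patched with the Slater-point step, is also valid but is tied to the barrier's specific form. The remaining bookkeeping in your write-up (choose $c\ge V^{x^*}(x(0))$ first, then $d$ large enough, and note that finiteness of $\phi^{x^*}$ on $S$ forces $y\in\interior\Omega$) matches the paper.
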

\begin{proof}
	We show that for each $x(0)$ such that $y(0)= G^\top \nabla V(x(0))$, there exists $c,d>0$ such that $x(0) \in S = \{x \in \reals: V^{x^*}(x) - c \leq 0\} \cap \{x \in \reals: \phi^{x^*}(x) - d \leq 0\}$, where $x^*$ as in Lemma \ref{lemma:FIEq}, and $S$ is positively invariant.
	
	First, consider the Lie derivatives of $V^{x^*}(x)$ and $\phi^{x^*}(x)$ on $\partial S$ along the solutions of \eqref{eq:FIdynamics}. There are two cases:
	\begin{enumerate}
		\item $V^{x^*}(x)=c$ and $\phi^{x^*}(x)\leq d$
		
		In this case, we consider the Lie derivative of $V^{x^*}$ along the solutions of \eqref{eq:FIdynamics}. By equilibrium-independent passivity of \eqref{eq:PlantStacked}, we have \vspace{-0.2cm}
		\begin{align*}
		L_q V^{x^*} \leq -(y\!-\!y^*)^\top u = -(y\!-\!y^*)^\top(F(y)\!+\!\nabla \phi(y)).
		\end{align*}
		Using \eqref{eq:NEcondition}, $F(y)$ monotone and $\phi(y)$ convex, we have \vspace{-0.2cm}
		\begin{align*}
		L_q V^{x^*}&\leq -(y\!-\!y^*)^\top(F(y)\!-\!F(y^*)+\nabla \phi(y)\!-\!\nabla \phi(y^*))\\
		&\leq 0.
		\end{align*}
		\item $V^{x^*}\leq c$ and $\phi^{x^*}(x)=d$
		
		We take the Lie derivative of $\phi^{x^*}(x)$ along the solutions of \eqref{eq:FIdynamics}, giving \vspace{-0.2cm}
		\begin{align*}
		L_q \phi^{x^*} = &(\nabla \phi(y)\!-\!\nabla \phi(y^*))^\top G^\top \nabla^2 V(x)^\top[f(x)\!-\!G(F(y)\\
		&+\nabla \phi(y))]\\
		=&(\nabla \phi(y)\!-\!\nabla \phi(y^*))^\top G^\top \nabla^2 V(x)^\top[f(x)-f(x^*)\\
		&-G(F(y)-F(y^*)+\nabla \phi(y)-\nabla \phi(y^*))].
		\end{align*}
		By Assumption \ref{asmp:nonlinearPlant}(c) and (e), $G^\top \nabla^2 V(x) G \succeq \alpha I$ for some $\alpha>0$. Then, we can get \vspace{-0.2cm}
		\begin{align*}
		L_q \phi^{x^*} \leq&\|\nabla \phi(y)\!-\!\nabla\phi(y^*)\|\Big[ \|G^\top \nabla^2 V(x)\|\\
		&\|f(x)\!-\!f(x^*)\|+\|G^\top \nabla^2 V(x) G\|\\
		&\|F(y)-F(y^*)\|-\alpha\|\nabla \phi(y)-\nabla \phi(y^*)\|\Big].
		\end{align*}
		Since $V^{x^*}<c$ and $V^{x^*}$ strongly convex, we have $\|x-x^*\|\leq c_1$, for some $c_1$. Since $f(x)$ and $\nabla V(x)$ are Lipschitz continuous, we have $\|f(x)-f(x^*)\|\leq \theta_3\|x-x^*\|$ and $\|G^\top \nabla V^2(x)\|\leq \theta_4$, for some $\theta_3,\theta_4$. Since $F(y)$ is Lipschitz continuous, we have $\|F(y)-F(y^*)\|\leq \theta_1\|y-y^*\|\leq\theta_1\theta_4\|x-x^*\|\leq c_1\theta_1\theta_4$. Therefore, \vspace{-0.2cm}
		\begin{align*}
		L_q \phi^{x^*} &\leq \|\nabla \phi(y)-\nabla \phi(y^*)\|\Big[c_1\theta_3\theta_4+c_1\theta_1\theta_4^2\|G\|\\
		&\quad-\alpha\|\nabla \phi(y)-\nabla \phi(y^*)\| \Big]
		\end{align*}
		By strict-convexity of $\phi^{x^*}(x)$, we have that $\|\nabla \phi(y)-\nabla \phi(y^*)\|\geq \frac{\phi^{x^*}(x)}{\|y-y^*\|}\geq \frac{d}{c_1\theta_4}$. If we take $d \geq d_c  := \frac{c_1\theta_4}{\alpha}[c_1\theta_3\theta_4+c_1\theta_1\theta_4^2\|G\|]$, then $L_q V_2 \leq 0$.	
	\end{enumerate}
	Since $S$ is a practical set by Lemma \ref{lemma:practicalSet}, by Lemma \ref{thm:tangent}, $q(x) \in T_S(x)$ for all $x \in \partial S$. In the interior of $S$, $q(x) \in T_S(x) = \reals^n$. Therefore, by Lemma \ref{thm:invariant}, $S$ is positively invariant. For all initial conditions $x(0)$ such that $y(0)\in \interior \Omega$, take $c\geq V^{x^*}(x(0))$ and $d\geq d_c$. Then $x(t)\in S$ for all $t\geq0$ and $y(t)\in \interior \Omega$ for all $t\geq 0$ since for all $x \in S$, $y = G^\top \nabla V(x) \in \interior \Omega$.
\end{proof}

\begin{thm} \label{thm:FI}
	Under Assumptions \ref{asmp:cost}, \ref{asmp:pseudo}(a) and \ref{asmp:nonlinearPlant}, the equilibrium $x^*\!=\!\pi( y^*)$ of \eqref{eq:FIdynamics}, where  $y^*$ is the NE of \eqref{eq:NE} and $\varepsilon$-vGNE of \eqref{eq:GNE}, is asymptotically stable. Moreover, if $y(0)\in \interior \Omega$, then the constraint $y(t)\in\Omega$ is satisfied for all $t\geq0$. 
\end{thm}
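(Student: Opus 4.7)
The plan is to split the theorem into its two assertions. The constraint satisfaction claim follows \emph{immediately} from Lemma \ref{lemma:FI}, since that lemma shows $\interior \Omega$ is output-positively-invariant for \eqref{eq:FIdynamics}; if $y(0) \in \interior \Omega \subset \Omega$, then $y(t) \in \interior \Omega \subset \Omega$ for all $t \geq 0$. So the substantive work is the asymptotic-stability claim, and I will proceed via a Lyapunov/LaSalle argument using the EIP storage function.

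I take as Lyapunov candidate the storage function $V^{x^*}(x) = V(x) - V(x^*) - \nabla V(x^*)^\top(x - x^*)$ centred at $x^* = \pi(y^*)$. By strong convexity of $V$ (Assumption \ref{asmp:nonlinearPlant}(e)), $V^{x^*}$ is positive definite with respect to $x^*$ and has compact sublevel sets. Since $x^*$ is a forced equilibrium of \eqref{eq:Plant} with equilibrium input $\bar u = 0$ (by Assumption \ref{asmp:nonlinearPlant}(d) together with \eqref{eq:NEcondition}, which yields $F(y^*) + \nabla \phi(y^*) = 0$), the EIP inequality along $\dot x = q(x) = f(x) - G(F(y)+\nabla \phi(y))$ gives
\begin{align*}
L_q V^{x^*}(x) &\leq (y-y^*)^\top(-F(y)-\nabla \phi(y)-0) \\
&= -(y-y^*)^\top\!\bigl(F(y)-F(y^*)\bigr) - (y-y^*)^\top\!\bigl(\nabla \phi(y)-\nabla \phi(y^*)\bigr),
\end{align*}
where in the second line I used \eqref{eq:NEcondition}. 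Both terms are non-negative: the first by strict monotonicity of $F$ (Assumption \ref{asmp:pseudo}(a)), the second by convexity of $\phi$. Hence $L_q V^{x^*}(x) \leq 0$, with equality forcing $(y-y^*)^\top(F(y)-F(y^*)) = 0$, which by strict monotonicity implies $y = y^*$.

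To apply LaSalle, I restrict to a forward-invariant compact set. Given initial condition $x(0)$ with $y(0) \in \interior \Omega$, choose $c \geq V^{x^*}(x(0))$ and $d \geq \max\{d_c, \phi^{x^*}(x(0))\}$ as in the proof of Lemma \ref{lemma:FI}; then $S = \{x : V^{x^*}(x) \leq c,\ \phi^{x^*}(x) \leq d\}$ is compact, positively-invariant, and contains $x(0)$. On $S$, trajectories are bounded, $L_q V^{x^*} \leq 0$, and the set $E = \{x \in S : L_q V^{x^*}(x) = 0\}$ is contained in $\{x \in S : h(x) = y^*\}$ by the monotonicity argument above. On $E$ the feedback evaluates to $u = -F(y^*) - \nabla \phi(y^*) = 0 = \bar u$, so any trajectory lying in $E$ satisfies $\dot x = f(x) + G\bar u$ with output identically $y^*$; equilibrium-independent observability (Assumption \ref{asmp:nonlinearPlant}(b)) together with Lemma \ref{lemma:EqUnique} then forces $x(t) \equiv x^*$. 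Thus the largest invariant subset of $E$ is $\{x^*\}$, and LaSalle's invariance principle yields $x(t) \to x^*$, i.e., asymptotic stability.

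The main obstacle I anticipate is not the Lyapunov computation itself, which is fairly clean once one notices $\bar u = 0$, but rather justifying that the LaSalle reduction actually ends at $\{x^*\}$ rather than at some larger zero-dynamics manifold: this is exactly where equilibrium-independent observability (combined with the uniqueness in Lemma \ref{lemma:EqUnique}) is essential, and where the reasoning must carefully distinguish between $y(t) \equiv y^*$ under the closed-loop vector field $q$ versus under the open-loop field $f(x) + G\bar u$.
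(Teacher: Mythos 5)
Your proposal is correct and follows essentially the same route as the paper: the storage function $V^{x^*}$ with the EIP inequality, strict monotonicity of $F$ plus convexity of $\phi$ to get $\dot V^{x^*}\leq 0$ with equality only at $y=y^*$, equilibrium-independent observability to conclude convergence to $x^*$, and Lemma \ref{lemma:FI} for constraint satisfaction. The only difference is that you spell out the LaSalle details (boundedness via the compact invariant set $S$, the reduction of the zero-derivative set to $\{x^*\}$ via Lemma \ref{lemma:EqUnique}) that the paper's proof leaves implicit in the single observability sentence.
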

\begin{proof}
	Take $V^{x^*}$ as the storage function of $\mathcal{P}$, where $x^*$ as in Lemma \ref{lemma:FIEq}. From equilibrium-independent passivity of \eqref{eq:PlantStacked}, the derivative along the solutions of \eqref{eq:FIdynamics} is  \vspace{-0.2cm}
	\begin{align*}
	\dot V^{x^*} &\leq -(y-y^*)^\top u=-(y-y^*)^\top(F(y)+\nabla\phi(y))\\
	&\leq -(y\!-\!y^*)^\top(F(y)\!-\!F(y^*)+\nabla\phi(y)\!-\!\nabla\phi(y^*))\leq 0
	\end{align*}
	and, by strict-monotonicity of $F(y)$ and convexity of $\phi(y)$, $\dot V = 0$ if and only if $y=y^*$. By equilibrium independent observability of \eqref{eq:Plant}, $x=x^*$ is asymptotically stable. By Lemma \ref{lemma:FI}, if $y(0)\in \interior \Omega$, $y(t)\in \interior \Omega$ for all $t\geq0$.
\end{proof}

\section{Partial-Information Gradient Feedback} \label{sec:nonlinearpartInfo}
Now, let's assume that each agent has only partial-information of the actions taken by the other players exchanged over an undirected, connected graph, $\mathcal{G}_c$. For now, we assume that each agent has enough knowledge of $g(y)$ in order to be able to compute its partial-gradient of the penalty function exactly. This assumption is motivated by scenarios in which the agents could have this knowledge:
\begin{enumerate}[1)]
	\item In the case of standard NE seeking, each agents constraints depend only on its own action, i.e., $g(y)=\col(g_1(y_1),\dots,g_M(y_M))$, 
	\item If each agent's constraint set depends only on the actions of its neighbours in the communication graph, $\mathcal{G}_c$, or
	\item If each agent can measure the constraint independently from the other agents. 
\end{enumerate}
In Section \ref{sec:fullyDist}, we relax this assumption.

For the individual actions, assume that each agent $i$ maintains an estimate, $\textbf{y}_{j}^i$, of the action of each agent $j$ and uses these to evaluate the partial-gradient of its original cost function instead of the true actions. Let $\textbf{y}_{-i}^i:= \col(\textbf{y}_{1}^i,\dots,\textbf{y}_{i-1}^i,\textbf{y}_{i+1}^i,\dots,\textbf{y}_{i-1}^i)$ and $\textbf{y}^i:= \col(\textbf{y}_{1}^i,\dots,\textbf{y}_{i-1}^i,y_i,\textbf{y}_{i+1}^i,\dots,\textbf{y}_{i-1}^i)$. Stacking the actions and estimates, we get $\textbf{y}_{-i}:= \col(\textbf{y}_{-1}^1,\dots,\textbf{y}_{-N}^N)$, the estimates only, and $\textbf{y}:= \col(\textbf{y}^1,\dots,\textbf{y}^N)$, the stacked actions and estimates. These actions and estimates are then be exchanged over a communication graph, $\mathcal{G}_c$ with Laplacian $L$, using a proportional consensus algorithm. 
\begin{asmp} \label{asmp:graph}
	The graph $\mathcal{G}_c$ is undirected and connected.
\end{asmp}

Let matrices $\mathcal R_i $, $\mathcal S_i $ for action and estimates selection be, \vspace{-0.2cm}
\begin{align} \label{eq:RS}
\begin{split}
\mathcal R_i &:= \begin{bmatrix} 0_{m_i\times m_{<i}} & I_{m_i} & 0_{m_i\times m_{>i}} \end{bmatrix}\\
\mathcal{S}_i &:= \begin{bmatrix} I_{m_{<i}} & 0_{m_{<i} \times m_i} & 0_{m_{<i} \times m_{>i}}\\
0_{m_{>i} \times m_{<i}} & 0_{m_{>i} \times m_i} &  I_{m_{>i}} \end{bmatrix}
\end{split}
\end{align}
with $m_{<i}:=\sum_{j<i\ j,i\in \mathcal{I}}m_j$ and $m_{>i}:=\sum_{j>i\ j,i\in \mathcal{I}}m_j$. Note that $y_i = \mathcal{R}_i\textbf{y}^i$ and $\textbf{y}_{-i}^i = \mathcal{R}_i\textbf{y}^i$.

Inspired by \cite{gpTAC18}, instead of \eqref{eq:nonlinFIfeedback}, we consider that each agent \eqref{eq:Plant} uses the following dynamic feedback \vspace{-0.2cm}
\begin{align} \label{eq:nonlinearPIFeedback}
\begin{split}
\dot{\textbf{y}}_{-i}^i &= -\mathcal{S}_i\sum_{j\in \mathcal{N}_i}(\textbf{y}^i-\textbf{y}^j)\\
u_i &= -\nabla_i \mathcal{J}(y_i,\textbf{y}^i_{-i})-\nabla_i \phi(y)-\mathcal{R}_i\sum_{j\in \mathcal{N}_i}(\textbf{y}^i-\textbf{y}^j)
\end{split}
\end{align}
where $\textbf{L} = L\otimes I$ and $\nabla_i \phi(y)$ can be computed using the information available to each agent. Note that \eqref{eq:nonlinearPIFeedback} has a gradient-play term (evaluated at estimates) and penalty term, as well as a dynamic Laplacian-based estimate-consensus component $\textbf{y}_{-i}^i$ which, in steady state, should bring all $\textbf{y}^i$ to consensus. We call the stacked vector of partial gradients evaluated at estimates $\textbf{F}(\textbf{y}):=\col(\nabla_i \mathcal{J}(y_i,\textbf{y}^i_{-i}))_{i\in\mathcal{I}}$, the extended-pseudo-gradient. Note that $\textbf{F}(\textbf{1}\otimes y) = F(y)$ for all $y\in \reals^m$. By Lemma 3 from \cite{bianchiGrammatico2019}, under Assumption \ref{asmp:pseudo}(b), the extended pseudo-gradient is Lipschitz continuous, $\|\textbf{F}(\textbf{y})-\textbf{F}(\textbf{y}')\|\leq \theta_2 \|\textbf{y}-\textbf{y}'\|$ for all $\textbf{y},\textbf{y}'\in\reals^{M}$, $M=Nm$, for $\mu\leq \theta_2 \leq \theta_1$.

From \eqref{eq:Plant} and \eqref{eq:nonlinearPIFeedback}, this gives overall stacked dynamics of \vspace{-0.2cm}
\begin{align} \label{eq:PIDynamics}
\Sigma: \begin{cases}
\dot{\textbf{y}}_{-i} = -\mathcal{S}\textbf{L}\textbf{y}=:\boldsymbol{f}_1(\textbf{x})\\
\dot x = f(x)\! -\!G(\textbf{F}(\textbf{y})\!+\!\nabla \phi(y)\!+\!\mathcal{R}\textbf{L}\textbf{y})\!=:\!\boldsymbol{f}_2(\textbf{x})\\
y=G^\top \nabla V(x)
\end{cases}
\end{align}
where $\textbf{x}=\col(\textbf{y}_{-i},x)$, $\mathcal{R}=\blkdiag(\mathcal{R}_i)_{i\in\mathcal{I}}$ and $\mathcal{S}=\blkdiag(\mathcal{S}_i)_{i\in\mathcal{I}}$. Let $\boldsymbol{f}(\textbf{x}):=\col(\boldsymbol{f}_1(\textbf{x}),\boldsymbol{f}_2(\textbf{x}))$. The unique equilibrium point of \eqref{eq:PIDynamics} is $(\bar{\textbf{y}}_{-i},x^*)=(\mathcal{S}\textbf{1}_{N}\otimes y^*,\pi(y^*))$, where $y^*$  is the NE of \eqref{eq:NE} and $\epsilon$-vGNE of \eqref{eq:GNE}. We denote $\bar{\textbf{y}} = \textbf{1}_{N}\otimes y^*$.

\begin{remark}
	It is important to note that in the extended space, $\textbf{F}$ is not monotone which was critical in the proof of Theorem \ref{thm:FI}. As such, we need to use different analyses to examine the behaviour of \eqref{eq:PIDynamics}.
\end{remark}

\begin{lemma} \label{lemma:PIEq}
	Under Assumptions \ref{asmp:cost}, \ref{asmp:pseudo}(b), \ref{asmp:nonlinearPlant} and \ref{asmp:graph}, $(\textbf{y}_{-i},x) = (\mathcal{S}\textbf{1}_N \otimes y^*, \pi(x^*))$ is the unique equilibrium of \eqref{eq:PIDynamics}.
\end{lemma}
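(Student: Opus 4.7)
The plan is to analyze the two stationary conditions $\boldsymbol{f}_1(\textbf{x}) = 0$ and $\boldsymbol{f}_2(\textbf{x}) = 0$ in turn. The first reduces to $\mathcal{S}\textbf{L}\textbf{y} = 0$, and my first key step is to show that this partial-consensus condition already forces full consensus $\textbf{L}\textbf{y} = 0$. Denote by $(\textbf{L}\textbf{y})_i \in \reals^m$ the $i$-th outer block of $\textbf{L}\textbf{y}$. The identity $\mathcal{S}_i^\top \mathcal{S}_i + \mathcal{R}_i^\top \mathcal{R}_i = I_m$ lets me rewrite $\mathcal{S}_i(\textbf{L}\textbf{y})_i = 0$ as: every sub-block of $(\textbf{L}\textbf{y})_i$ indexed by $j \neq i$ is zero. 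Since $L$ is symmetric, $\textbf{1}_N^\top \textbf{L} = 0$, hence $\sum_i (\textbf{L}\textbf{y})_i = 0$; looking at the $\ell$-th sub-block of this sum, all summands with $i \neq \ell$ are already zero by the previous observation, forcing the $\ell$-th sub-block of $(\textbf{L}\textbf{y})_\ell$ itself to vanish. Sweeping $\ell$ covers every diagonal sub-block, so $\textbf{L}\textbf{y} = 0$.

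By connectedness of $\mathcal{G}_c$ (Assumption \ref{asmp:graph}), $\ker \textbf{L} = \{\textbf{1}_N \otimes v : v \in \reals^m\}$, so $\textbf{y} = \textbf{1}_N \otimes v$ for some $v \in \reals^m$. Matching $y_i = \mathcal{R}_i \textbf{y}^i = \mathcal{R}_i v$ for every $i$ pins down $v = y$, whence $\textbf{y} = \textbf{1}_N \otimes y$ and $\textbf{y}_{-i} = \mathcal{S}\textbf{1}_N \otimes y$. Plugging this into $\boldsymbol{f}_2 = 0$, and using $\textbf{F}(\textbf{1}_N \otimes y) = F(y)$ together with $\mathcal{R}\textbf{L}\textbf{y} = 0$, the equation collapses to $f(x) - G(F(y) + \nabla \phi(y)) = 0$, which is precisely the full-information equilibrium condition treated in Lemma \ref{lemma:FIEq}. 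That lemma then yields $y = y^*$, the unique NE of \eqref{eq:NE} and $\varepsilon$-vGNE of \eqref{eq:GNE}, and $x = \pi(y^*)$. Finally, I will verify directly that $(\mathcal{S}\textbf{1}_N \otimes y^*, \pi(y^*))$ indeed satisfies both equations: $\textbf{L}(\textbf{1}_N \otimes y^*) = 0$ makes $\boldsymbol{f}_1$ vanish, and Assumption \ref{asmp:nonlinearPlant}(d) combined with \eqref{eq:NEcondition} makes $\boldsymbol{f}_2$ vanish.

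The main obstacle is the first step---the passage from partial consensus $\mathcal{S}\textbf{L}\textbf{y} = 0$ to full consensus $\textbf{L}\textbf{y} = 0$. The selection matrices $\mathcal{S}_i$ discard each agent's own-action component of its disagreement block, so a priori one only controls the off-diagonal sub-blocks of $\textbf{L}\textbf{y}$; the missing diagonal information is recovered from the global conservation $\textbf{1}_N^\top \textbf{L} = 0$, exactly as above. Once that obstacle is cleared, everything else reduces to the full-information analysis already carried out in Lemma \ref{lemma:FIEq}, and the converse direction is a one-line check.
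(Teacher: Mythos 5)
Your proof is correct and follows essentially the same route as the paper's: verify that the candidate point is an equilibrium, use the estimate-update equation to force consensus $\textbf{y} = \textbf{1}_N \otimes y$, and then reduce the second stationarity condition to the full-information equilibrium equation resolved by Lemma \ref{lemma:FIEq}. The only difference is that you explicitly justify the passage from $\mathcal{S}\textbf{L}\textbf{y}=0$ to $\textbf{L}\textbf{y}=0$ (via symmetry of $L$, so $\textbf{1}_N^\top\textbf{L}=0$, together with the block structure of the $\mathcal{S}_i$), a step the paper's proof asserts without detail, so your write-up is if anything more complete on that point.
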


\begin{proof}
See Appendix \ref{appendix:PIEq}.
\end{proof}

We show output-positive invariance but showing positive invariance of the sublevel sets of two functions:
\begin{align*}
\textbf{V}^{\bar{\textbf{x}}}(\textbf{x}) &:= \frac{1}{2}\|\textbf{y}_{-i}-\bar{\textbf{y}}_{-i}\|^2+V^{x^*}(x)\\
\boldsymbol{\phi}^{\bar{\textbf{x}}}(\textbf{x}) &:= \phi^{x^*}(x)
\end{align*}
First, we show that the intersection of their sub-level sets forms a practical set.
\begin{lemma} \label{lemma:practicalSet2}
	Under Assumption \ref{asmp:nonlinearPlant}, $\textbf{S}=\{\textbf{x}\in \reals^{M-m+n}: \textbf{V}^{\bar{\textbf{x}}}(\textbf{x})-c \leq 0, \boldsymbol{\phi}^{\bar{\textbf{x}}}(\textbf{x})-d \leq 0\}$ is a compact, practical set for all $c,d>0$.
\end{lemma}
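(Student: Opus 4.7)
The plan is to verify the three conditions of Definition \ref{def:practicalSet} together with compactness, closely following the structure used for Lemma \ref{lemma:practicalSet} in the full-information setting. The key observation is that $\boldsymbol{\phi}^{\bar{\textbf{x}}}$ depends only on the $x$-block of $\textbf{x}$, while $\textbf{V}^{\bar{\textbf{x}}}$ separates additively into a quadratic in $\textbf{y}_{-i}-\bar{\textbf{y}}_{-i}$ and $V^{x^*}(x)$. Consequently, most estimates already developed for $V^{x^*}$ and $\phi^{x^*}$ in the proof of Lemma \ref{lemma:FI} carry over, with only the extra $\textbf{y}_{-i}$-directions to handle. For condition (1), I would take the ambient open set to be $\mathcal{O}=\{\textbf{x}:G^\top\nabla V(x)\in\interior\Omega\}$, on which both defining functions are $C^2$ by Assumption \ref{asmp:nonlinearPlant}(e) and the $C^2$-regularity of $\phi$ on $\interior\Omega$. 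For compactness, $V^{x^*}$ inherits strong convexity from $V$ (it is the Bregman divergence of a strongly convex function), so $\textbf{V}^{\bar{\textbf{x}}}$ is strongly convex in $\textbf{x}$; its sublevel set $\{\textbf{V}^{\bar{\textbf{x}}}\leq c\}$ is therefore compact, and $\textbf{S}$, a closed subset of this sublevel set intersected with $\mathcal{O}$, is compact.

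For condition (2), given $\textbf{x}\in\textbf{S}$ I would construct $z$ as a combination $z=\mu_1(\bar{\textbf{x}}-\textbf{x})+\mu_2\bigl(0,\,-G(\nabla\phi(y)-\nabla\phi(y^*))\bigr)$ with $\mu_1,\mu_2>0$ chosen based on $\textbf{x}$. Convexity of $\textbf{V}^{\bar{\textbf{x}}}$ (which is genuine here) yields the Slater-type bound $\nabla\textbf{V}^{\bar{\textbf{x}}}(\textbf{x})^\top(\bar{\textbf{x}}-\textbf{x})\leq-\textbf{V}^{\bar{\textbf{x}}}(\textbf{x})-c$, while the second summand contributes $-\mu_2(y-y^*)^\top(\nabla\phi(y)-\nabla\phi(y^*))\leq 0$ by strict convexity of $\phi$. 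For $\boldsymbol{\phi}^{\bar{\textbf{x}}}$, the chain rule gives $\nabla\boldsymbol{\phi}^{\bar{\textbf{x}}}(\textbf{x})=(0,\nabla^2 V(x) G(\nabla\phi(y)-\nabla\phi(y^*)))$, so the second summand produces the dominant term $-\mu_2(\nabla\phi(y)-\nabla\phi(y^*))^\top G^\top\nabla^2V(x)G(\nabla\phi(y)-\nabla\phi(y^*))\leq-\mu_2\alpha\|\nabla\phi(y)-\nabla\phi(y^*)\|^2$ by Assumption \ref{asmp:nonlinearPlant}(c),(e). The cross-term from the $\mu_1$-piece is controllable by the Lipschitz bounds on $\nabla V$ and $\nabla^2V$ used in Lemma \ref{lemma:FI}, and on the level set $\{\boldsymbol{\phi}^{\bar{\textbf{x}}}=d\}$ strict convexity of $\phi$ bounds $\|\nabla\phi(y)-\nabla\phi(y^*)\|$ away from zero (depending on $d$), so choosing $\mu_1$ small and $\mu_2$ large makes both $h_k(\textbf{x})+\nabla h_k(\textbf{x})^\top z<0$ strictly.

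The principal obstacle is condition (3): one must exhibit a \emph{single} Lipschitz vector field $f_0$ that simultaneously decreases $\textbf{V}^{\bar{\textbf{x}}}$ and $\boldsymbol{\phi}^{\bar{\textbf{x}}}$ on their respective level sets, despite $\boldsymbol{\phi}^{\bar{\textbf{x}}}$ being blind to the $\textbf{y}_{-i}$-coordinates. I would adopt the same ansatz as a vector field, $f_0(\textbf{x}):=\mu_1(\bar{\textbf{x}}-\textbf{x})+\mu_2\bigl(0,\,-G(\nabla\phi(y)-\nabla\phi(y^*))\bigr)$, which is Lipschitz on a neighbourhood of the compact set $\textbf{S}$ since $\nabla\phi\circ(G^\top\nabla V)$ is locally Lipschitz on $\interior\Omega$ and $\textbf{S}$ is at positive distance from $\partial\Omega$. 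The computation of $L_{f_0}\textbf{V}^{\bar{\textbf{x}}}$ yields a manifestly non-positive combination of $-\mu_1\|\textbf{y}_{-i}-\bar{\textbf{y}}_{-i}\|^2$, $-\mu_1(\nabla V(x)-\nabla V(x^*))^\top(x-x^*)\leq-\mu_1\alpha\|x-x^*\|^2$ (strong convexity of $V$), and $-\mu_2(y-y^*)^\top(\nabla\phi(y)-\nabla\phi(y^*))\leq 0$ (strict convexity of $\phi$), which is strictly negative on $\{\textbf{V}^{\bar{\textbf{x}}}=c>0\}$. For $L_{f_0}\boldsymbol{\phi}^{\bar{\textbf{x}}}$ on $\{\boldsymbol{\phi}^{\bar{\textbf{x}}}=d>0\}$, the dominant term is the negative quadratic $-\mu_2\alpha\|\nabla\phi(y)-\nabla\phi(y^*)\|^2$, and the cross-term scaling with $\mu_1$ is absorbed by choosing $\mu_2$ sufficiently large (possibly depending on $c,d$) via exactly the $d_c$-type Lipschitz estimate that appeared in the proof of Lemma \ref{lemma:FI}. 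This tuning of $(\mu_1,\mu_2)$ against the Lipschitz constants of $\nabla V$, $\nabla^2V$, and $\nabla\phi$ on $\textbf{S}$ is the step I expect to require the most care, and it is the only point at which the $\textbf{y}_{-i}$-augmentation genuinely changes anything relative to Lemma \ref{lemma:practicalSet}.
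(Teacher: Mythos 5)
Your proposal is correct and takes essentially the same approach as the paper: the paper simply notes that with $\textbf{G}=\col(0,G)$ one has $y=\textbf{G}^\top\nabla\textbf{V}(\textbf{x})$ and that $\textbf{V}$ is strongly convex, then repeats the verification of Definition \ref{def:practicalSet} from Lemma \ref{lemma:practicalSet} with those substitutions, and your fixed-coefficient field $\mu_1(\bar{\textbf{x}}-\textbf{x})+\mu_2\bigl(0,-G(\nabla\phi(y)-\nabla\phi(y^*))\bigr)$ is just a harmless variant of the paper's state-dependent choice of $z$ and $f_0$. The only blemish is the stray ``$-c$'' in your convexity bound, which should read $\nabla\textbf{V}^{\bar{\textbf{x}}}(\textbf{x})^\top(\bar{\textbf{x}}-\textbf{x})\leq-\textbf{V}^{\bar{\textbf{x}}}(\textbf{x})$; this weaker bound already suffices for the argument.
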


\begin{proof}
	By letting $\textbf{G} = \col(0,G)$, we have that $y = \textbf{G}^\top \nabla \textbf{V}(\textbf{x})$. Additionally, since $\textbf{V}(\textbf{x})$ is strongly-convex in $\textbf{x}$, the proof follows almost identically to the proof of Lemma \ref{lemma:practicalSet}, replacing $x$ with $\textbf{x}$, $G$ with $\textbf{G}$, $V^{x^*}(x)$ with $\textbf{V}^{\bar{\textbf{x}}}(\textbf{x})$ and $\phi^{x^*}(x)$ with $\boldsymbol{\phi}^{\bar{\textbf{x}}}(\textbf{x})$ and is thus omitted for the sake of brevity.
\end{proof}	

\begin{lemma} \label{lemma:nonlinearPI}
	Under Assumptions \ref{asmp:cost}, \ref{asmp:pseudo}(b), \ref{asmp:nonlinearPlant} and \ref{asmp:graph}, if $\mu(\lambda_2(L)-\theta_2)>\theta_2^2$, then the set $\interior \Omega=\{y:g(y)<0\}$ is output-positively-invariant for the dynamics \eqref{eq:PIDynamics}. That is, for all $x(0)$ such that $y(0)\in\interior\Omega$, $y(t)\in \interior\Omega$ for all $t\geq0$, i.e., the output constraints are satisfied for all time.
\end{lemma}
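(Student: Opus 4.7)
The plan is to adapt the strategy of Lemma \ref{lemma:FI} to the extended state $\textbf{x}=\col(\textbf{y}_{-i},x)$, using the practical set $\textbf{S}=\{\textbf{V}^{\bar{\textbf{x}}}(\textbf{x})\le c\}\cap\{\boldsymbol{\phi}^{\bar{\textbf{x}}}(\textbf{x})\le d\}$ whose practicality and compactness are already granted by Lemma \ref{lemma:practicalSet2}. Positive invariance of $\textbf{S}$ will again follow from Lemmas \ref{thm:invariant}--\ref{thm:tangent} once the Lie derivatives $L_{\boldsymbol{f}}\textbf{V}^{\bar{\textbf{x}}}$ and $L_{\boldsymbol{f}}\boldsymbol{\phi}^{\bar{\textbf{x}}}$ are shown to be non-positive on the relevant parts of $\partial\textbf{S}$, and the output-invariance of $\interior\Omega$ will then follow by picking $c\geq \textbf{V}^{\bar{\textbf{x}}}(\textbf{x}(0))$ and $d$ beyond the threshold identified below.

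For the level-set of $\textbf{V}^{\bar{\textbf{x}}}$, I would differentiate along $\boldsymbol{f}$ and, using EIP of $\mathcal{P}$ (storage $V^{x^*}$, equilibrium input $\bar u=0$) together with $\dot{\textbf{y}}_{-i}=-\mathcal{S}\textbf{L}\textbf{y}$, obtain
\begin{align*}
L_{\boldsymbol{f}}\textbf{V}^{\bar{\textbf{x}}}\leq -(\textbf{y}_{-i}-\bar{\textbf{y}}_{-i})^\top\mathcal{S}\textbf{L}\textbf{y}-(y-y^*)^\top[\textbf{F}(\textbf{y})+\nabla\phi(y)+\mathcal{R}\textbf{L}\textbf{y}].
\end{align*}
The orthogonal decomposition $\mathcal{R}^\top\mathcal{R}+\mathcal{S}^\top\mathcal{S}=I$ and the kernel property $\textbf{L}\bar{\textbf{y}}=0$ collapse the Laplacian cross-terms to $-(\textbf{y}-\bar{\textbf{y}})^\top\textbf{L}(\textbf{y}-\bar{\textbf{y}})\le-\lambda_2(L)\|\textbf{y}-\bar{\textbf{y}}\|_\perp^2$. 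For the gradient terms, adding and subtracting $F(y)$ and $F(y^*)$ and using the identity $F(y^*)+\nabla\phi(y^*)=0$ from \eqref{eq:NEcondition}, strong monotonicity of $F$, convexity of $\phi$, and Lipschitz continuity of $\textbf{F}$ give
\begin{align*}
(y-y^*)^\top[\textbf{F}(\textbf{y})+\nabla\phi(y)]\geq \mu\|y-y^*\|^2-\theta_2\|y-y^*\|\,\|\textbf{y}-\mathbf{1}_N\otimes y\|.
\end{align*}
A Young's inequality on the cross term, together with the bound $\|\textbf{y}-\mathbf{1}_N\otimes y\|\le\|\textbf{y}-\bar{\textbf{y}}\|_\perp$, produces a quadratic form in $(\|y-y^*\|,\|\textbf{y}-\bar{\textbf{y}}\|_\perp)$ whose negative-semidefiniteness is exactly the hypothesis $\mu(\lambda_2(L)-\theta_2)>\theta_2^2$; hence $L_{\boldsymbol{f}}\textbf{V}^{\bar{\textbf{x}}}\leq 0$ on $\{\textbf{V}^{\bar{\textbf{x}}}=c\}\cap\{\boldsymbol{\phi}^{\bar{\textbf{x}}}\le d\}$.

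For the level-set of $\boldsymbol{\phi}^{\bar{\textbf{x}}}$, I would follow case (2) of Lemma \ref{lemma:FI}, noting that $\boldsymbol{\phi}^{\bar{\textbf{x}}}=\phi^{x^*}(x)$ depends only on $x$, so $L_{\boldsymbol{f}}\boldsymbol{\phi}^{\bar{\textbf{x}}}=L_{\boldsymbol{f}_2}\phi^{x^*}$. The calculation is identical to the full-information one except that $\dot x$ now contains the extra term $-G\mathcal{R}\textbf{L}\textbf{y}$ and the gradient is evaluated at $\textbf{y}$ rather than $y$. Since $\textbf{V}^{\bar{\textbf{x}}}\le c$ confines $\textbf{x}$ to a compact set, all of $\|\textbf{y}-\bar{\textbf{y}}\|$, $\|f(x)-f(x^*)\|$, $\|\textbf{F}(\textbf{y})-F(y^*)\|$ and $\|\mathcal{R}\textbf{L}\textbf{y}\|$ are bounded by constants depending on $c$, while strict convexity of $\phi^{x^*}$ forces $\|\nabla\phi(y)-\nabla\phi(y^*)\|$ to grow like $d/\|y-y^*\|$. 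Collecting the bounds yields a threshold $d_c(c)$, analogous to the one in Lemma \ref{lemma:FI}, such that $L_{\boldsymbol{f}}\boldsymbol{\phi}^{\bar{\textbf{x}}}\le 0$ whenever $d\geq d_c(c)$, which closes the invariance argument.

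The main obstacle I expect is the $\boldsymbol{\phi}^{\bar{\textbf{x}}}$ step: the extra Laplacian term $G\mathcal{R}\textbf{L}\textbf{y}$ is not controlled a priori by $\phi^{x^*}(x)$, and the non-monotonicity of $\textbf{F}$ rules out the clean cancellation that worked in Lemma \ref{lemma:FI}. The fix is to tie $\|\textbf{y}-\bar{\textbf{y}}\|$ to the sublevel bound through $\textbf{V}^{\bar{\textbf{x}}}\le c$, which is exactly why the two scalar inequalities defining $\textbf{S}$ must be enforced jointly. Once this coupled bound is in hand, the rest of the argument — tangent-cone characterization via Lemma \ref{thm:tangent}, positive invariance via Lemma \ref{thm:invariant}, and the final implication $x(t)\in\textbf{S}\Rightarrow y(t)\in\interior\Omega$ — is routine.
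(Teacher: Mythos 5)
Your proposal is correct and follows essentially the same route as the paper: positive invariance of the same set $\textbf{S}$ via the two boundary cases on $\textbf{V}^{\bar{\textbf{x}}}$ and $\boldsymbol{\phi}^{\bar{\textbf{x}}}$, the identity $\mathcal{R}^\top y+\mathcal{S}^\top\textbf{y}_{-i}=\textbf{y}$ to collapse the Laplacian terms, the $c$-dependent compactness bound to obtain the threshold $d_c$, and the tangent-cone argument of Lemmas \ref{thm:tangent} and \ref{thm:invariant}. The only cosmetic difference is that where you re-derive the quadratic-form bound under $\mu(\lambda_2(L)-\theta_2)>\theta_2^2$ via Young's inequality, the paper simply invokes Theorem 2 of \cite{gpTAC18}, which contains exactly that argument.
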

\begin{proof}
 We show that for all $c>0$, there exists $d_c$ such that for all $d\geq d_c$, $\textbf{S}$ is positively invariant for \eqref{eq:PIDynamics}. Then, for all $(\textbf{y}_{-i}(0),x(0))$ such that $y(0)\in \interior \Omega$, we have that $(\textbf{y}_{-i}(t),x(t))\in S_c^d$ for some $c,d>0$ and $y(t)\in \interior \Omega$ for all $t\geq0$. The proof is similar to that of Lemma \ref{lemma:FI}, however the fact that $\textbf{F}$ is not monotone complicates the analysis.
	
	First, consider the Lie derivatives of $\textbf{V}^{\bar{\textbf{x}}}(\textbf{x})$ and $\boldsymbol{\phi}^{\bar{\textbf{x}}}(\textbf{x})$ along the solutions of \eqref{eq:FIdynamics}. There are two cases to check:
	\begin{enumerate}
		\item $\textbf{V}^{\bar{\textbf{x}}}(\textbf{x})=c$ and $\boldsymbol{\phi}^{\bar{\textbf{x}}}(\textbf{x})\leq d$
		
		In this case, we consider the Lie derivative of $V$ along the solutions of \eqref{eq:PIDynamics}. By equilibrium-independent passivity of \eqref{eq:PlantStacked}, we get \vspace{-0.2cm}
		\begin{align*}
		L_{\boldsymbol{f}} \textbf{V}^{\bar{\textbf{x}}} & \leq -(\textbf{y}_{-i}-\bar{\textbf{y}}_{-i})^\top\mathcal{S}\textbf{L}\textbf{y}-(y-y^*)^\top u\\
		&\leq -(\textbf{y}_{-i}-\bar{\textbf{y}}_{-i})^\top\mathcal{S}\textbf{L}\textbf{y}\\
		&\quad-(y-y^*)^\top(\textbf{F}(\textbf{y})+\nabla \phi(y)+\mathcal{R}\textbf{L}\textbf{y}).
		\end{align*}
		Since $\textbf{F}(\bar{\textbf{y}})=F(y^*)$, $F(y^*)+\nabla \phi(y^*)=0$ and $\textbf{L}\bar{\textbf{y}}=0$, we have \vspace{-0.2cm}
		\begin{align*}
		L_{\boldsymbol{f}} \textbf{V}^{\bar{\textbf{x}}} &\leq -(\textbf{y}_{-i}-\bar{\textbf{y}}_{-i})^\top\mathcal{S}\textbf{L}(\textbf{y}-\bar{\textbf{y}})\\
		&\quad-(y-y^*)^\top(\textbf{F}(\textbf{y})-\textbf{F}(\bar{\textbf{y}})+\nabla \phi(y)-\nabla \phi(y^*))\\
		&\quad-(y-y^*)^\top\mathcal{R}\textbf{L}(\textbf{y}-\bar{\textbf{y}}).
		\end{align*}
		By monotonicity of $\nabla \phi(y)$ and that $\mathcal{R}^\top y+\mathcal{S}^\top \textbf{y}_{-i}\! =\! \textbf{y}$, \vspace{-0.2cm}
		\begin{align}\label{eq_PI_Vdot_NL}
		L_{\boldsymbol{f}} \textbf{V}^{\bar{\textbf{x}}} &\!\leq -\!(\textbf{y}\!-\!\bar{\textbf{y}})^\top\textbf{L}(\textbf{y}\!-\!\bar{\textbf{y}})
		\!-\!(\textbf{y}\!-\!\bar{\textbf{y}})^\top\mathcal{R}^\top(\textbf{F}(\textbf{y})\!-\!\textbf{F}(\bar{\textbf{y}})).
		\end{align}
		If $\mu(\lambda_2(L)-\theta_2)>\theta_2^2$, then by Theorem 2 in \cite{gpTAC18}, we have that	$L_{\boldsymbol{f}} \textbf{V}^{\bar{\textbf{x}}}\leq 0$.
		\item $\textbf{V}^{\bar{\textbf{x}}}(\textbf{x})\leq c$ and $\boldsymbol{\phi}^{\bar{\textbf{x}}}(\textbf{x})=d$
		
		In the second case, we take the Lie derivative of $\boldsymbol{\phi}^{\bar{\textbf{x}}}(\textbf{x})$ along the solutions of \eqref{eq:FIdynamics}, giving \vspace{-0.2cm}
		\begin{align*}
		L_{\boldsymbol{f}} \boldsymbol{\phi}^{\bar{\textbf{x}}} &=(\nabla \phi(y)-\nabla \phi(y^*))^\top G^\top \nabla^2 V(x)[ f(x)\\
		&\quad-G(\textbf{F}(\textbf{y})+\nabla \phi(y)+\mathcal{R}\textbf{L}\textbf{y})].
		\end{align*}
		By Assumption \ref{asmp:nonlinearPlant} (c) and (e), we have that $G^\top \nabla^2 V(x) G \succeq \alpha I$ for some $\alpha\!>\!0$. Then, we get \vspace{-0.2cm}
		\begin{align*}
		L_{\boldsymbol{f}} \boldsymbol{\phi}^{\bar{\textbf{x}}} &=(\nabla \phi(y)-\nabla \phi(y^*))^\top G^\top \nabla^2 V(x)[ f(x)-f(x^*)\\
		&\quad-G(\textbf{F}(\textbf{y})-\textbf{F}(\bar{\textbf{y}})+\nabla \phi(y)-\nabla \phi(y^*)\\
		&\quad+\mathcal{R}\textbf{L}(\textbf{y}-\bar{\textbf{y}}))]\\
		&\leq\|\nabla \phi(y)\!-\!\nabla \phi(y^*)\|\Big[ \|G^\top \nabla^2 V(x)\|\|f(x)\!-\!f(x^*)\|\\
		&\quad +\|G^\top \nabla^2 V(x)\|\|G\|(\|\textbf{F}(\textbf{y})-\textbf{F}(\bar{\textbf{y}})\|\\
		&\quad+\|\mathcal{R}\textbf{L}\|\|\textbf{y}-\bar{\textbf{y}}\|)-\alpha\|\nabla \phi(y)-\nabla \phi(y^*)\|\Big].
		\end{align*}
		Using similar arguments to those in the proof of Lemma \ref{lemma:FI}, there exists $d_c$ such that for all $d\geq d_c$ $L_{\boldsymbol{f}} \boldsymbol{\phi}^{\bar{\textbf{x}}} \leq 0$.
	\end{enumerate}
	Then, by Lemmas \ref{lemma:practicalSet2} and \ref{thm:tangent} for all $\textbf{x}\in \partial \textbf{S}$, $\boldsymbol{f}(\textbf{x})\in T_{\textbf{S}}(\boldsymbol{x})$ and for $\textbf{x}\in \interior \textbf{S}$, $\boldsymbol{f}(\textbf{x})\in \reals^{M-m+n}= T_{\textbf{S}}(\boldsymbol{x})$. Therefore, $\textbf{S}$ is positively invariant by Lemma \ref{thm:invariant}. For all initial conditions $x(0)$ such that $y(0)\in \interior \Omega$, take $c\geq V(\textbf{y}_{-i}(0),x(0))$ and $d\geq d_c$. Then $\textbf{x}(t)\in \textbf{S}$ for all $t\geq0$ and $y(t)\in \interior \Omega$ for all $t\geq 0$.
\end{proof}
\begin{thm}
	Under Assumptions \ref{asmp:cost}, \ref{asmp:pseudo}(b), \ref{asmp:nonlinearPlant} and \ref{asmp:graph}, if $\mu(\lambda_2(L)-\theta_2)>\theta_2^2$, then the equilibrium $(\bar{\textbf{y}}_{-i},x^*)=(\mathcal{S}\textbf{1}_Ny^*,\pi(y^*))$ of \eqref{eq:PIDynamics}, where $y^*$ is the NE of \eqref{eq:NE}  and $\varepsilon$-vGNE of \eqref{eq:GNE}, is asymptotically stable. Moreover, if $y(0)\in \interior \Omega$, then the constraint $y(t)\in\Omega$ is satisfied for all $t\geq0$. 
\end{thm}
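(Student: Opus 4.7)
The plan is to mirror the proof of Theorem~\ref{thm:FI} from the full-information case, but using the composite Lyapunov function $\textbf{V}^{\bar{\textbf{x}}}(\textbf{x}) = \tfrac{1}{2}\|\textbf{y}_{-i}-\bar{\textbf{y}}_{-i}\|^2 + V^{x^*}(x)$ introduced before Lemma~\ref{lemma:practicalSet2}, and then to import the constraint-invariance conclusion of Lemma~\ref{lemma:nonlinearPI} verbatim for the last sentence. Differentiating $\textbf{V}^{\bar{\textbf{x}}}$ along solutions of \eqref{eq:PIDynamics}, exactly the manipulation performed in Case~1 of Lemma~\ref{lemma:nonlinearPI} (equilibrium-independent passivity of $\mathcal{P}$, convexity of $\phi$, monotonicity of $\nabla\phi$, the identity $\mathcal{R}^\top y + \mathcal{S}^\top \textbf{y}_{-i} = \textbf{y}$, and the facts $\textbf{L}\bar{\textbf{y}}=0$ and $\textbf{F}(\bar{\textbf{y}})=F(y^*)$) recovers the key inequality \eqref{eq_PI_Vdot_NL},
\begin{align*}
\dot{\textbf{V}}^{\bar{\textbf{x}}} \leq -(\textbf{y}-\bar{\textbf{y}})^\top\textbf{L}(\textbf{y}-\bar{\textbf{y}}) - (\textbf{y}-\bar{\textbf{y}})^\top \mathcal{R}^\top(\textbf{F}(\textbf{y})-\textbf{F}(\bar{\textbf{y}})).
\end{align*}

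Next I would show that under the spectral-gap hypothesis $\mu(\lambda_2(L)-\theta_2)>\theta_2^2$ this expression is strictly negative off $\{\textbf{y}=\bar{\textbf{y}}\}$. Following Theorem~2 of \cite{gpTAC18}, one decomposes $\textbf{y}-\bar{\textbf{y}}$ into its consensus and disagreement components; strong monotonicity of $F$ with modulus $\mu$ handles the consensus component via $\mathcal{R}^\top$, Lipschitz continuity of $\textbf{F}$ with constant $\theta_2$ bounds the non-monotone cross term, and the Laplacian spectral gap $\lambda_2(L)$ dominates the disagreement component. The stated inequality is precisely the condition that makes the resulting $2\times 2$ quadratic form negative-definite, yielding $\dot{\textbf{V}}^{\bar{\textbf{x}}}\leq -\kappa\|\textbf{y}-\bar{\textbf{y}}\|^2$ for some $\kappa>0$. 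Hence $\dot{\textbf{V}}^{\bar{\textbf{x}}}=0$ forces $\textbf{y}=\bar{\textbf{y}}$, i.e.\ $y=y^*$ and $\textbf{y}_{-i}=\bar{\textbf{y}}_{-i}$.

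To pass from convergence of the output to convergence of the full state I would invoke a LaSalle argument on the compact positively-invariant sublevel set $\textbf{S}$ of Lemma~\ref{lemma:practicalSet2} (positive invariance being exactly Lemma~\ref{lemma:nonlinearPI}). On the largest invariant subset of $\{\textbf{x}\in \textbf{S}: \dot{\textbf{V}}^{\bar{\textbf{x}}}=0\}$ one has $y(t)\equiv y^*$ and $\textbf{y}_{-i}(t)\equiv \bar{\textbf{y}}_{-i}$, so the plant reduces to $\dot x = f(x)+G\bar u$ with constant output $y^*$; equilibrium-independent observability (Assumption~\ref{asmp:nonlinearPlant}(b)) then forces $x(t)\equiv x^*=\pi(y^*)$. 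Combined with uniqueness of the equilibrium from Lemma~\ref{lemma:PIEq}, this gives asymptotic stability of $(\bar{\textbf{y}}_{-i},x^*)$, and Lemma~\ref{lemma:nonlinearPI} supplies the ``$y(0)\in\interior\Omega \Rightarrow y(t)\in\Omega\ \forall t\geq 0$'' clause.

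The main obstacle, as already flagged in the Remark preceding Lemma~\ref{lemma:PIEq}, is the loss of monotonicity of $\textbf{F}$ in the extended (estimate) space, which blocks the clean ``pseudo-gradient is monotone'' step used in Theorem~\ref{thm:FI}. The resolution is the packaging of the Laplacian dissipation term with the non-monotone extended-gradient term and its domination via $\mu(\lambda_2(L)-\theta_2)>\theta_2^2$; this is exactly why the spectral-gap condition appears as a hypothesis of the theorem rather than being absorbed into the earlier assumptions.
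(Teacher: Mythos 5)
Your proposal is correct and follows essentially the same route as the paper: the same composite Lyapunov function, the same derivative bound \eqref{eq_PI_Vdot_NL}, the same appeal to Theorem~2 of \cite{gpTAC18} under $\mu(\lambda_2(L)-\theta_2)>\theta_2^2$, convergence of the state via equilibrium-independent observability, and Lemma~\ref{lemma:nonlinearPI} for the constraint-satisfaction clause. The only difference is that you make explicit the LaSalle/invariance step that the paper compresses into its one-line observability argument.
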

\begin{proof}
	Consider the Lyapunov candidate function $V = \frac{1}{2}\|\textbf{y}_{-i}-\bar{\textbf{y}}_{-i}\|^2+V^{x^*}(x)$, where $x^*=\pi(y^*)$ as in Lemma \ref{lemma:PIEq}. Taking the derivative along the solutions of \eqref{eq:PIDynamics} yields as in \eqref{eq_PI_Vdot_NL}, \vspace{-0.2cm}
	\begin{align*}
	\dot V \leq -(\textbf{y}-\bar{\textbf{y}})^\top\textbf{L}(\textbf{y}-\bar{\textbf{y}})-(\textbf{y}-\bar{\textbf{y}})^\top\mathcal{R}^\top(\textbf{F}(\textbf{y})-\textbf{F}(\bar{\textbf{y}}))\
	\end{align*}
	By Theorem 2 in \cite{gpTAC18}, if $\mu(\lambda_2(L)-\theta_2)>\theta_2^2$, $\dot V \leq 0$ and $\dot V = 0$ if and only if $\textbf{y}=\bar{\textbf{y}}$. By equilibrium independent observability, $(\textbf{y}_{-i},x)=(\bar{\textbf{y}}_{-i},x^*)$ is asymptotically stable. By Lemma \ref{lemma:nonlinearPI}, if $y(0)\in \interior \Omega$, $y(t)\in \interior \Omega$ for all $t\geq0$.
\end{proof}

\section{Fully-Distributed Constraint Information} \label{sec:fullyDist}
Next, we consider a partial-information feedback with fully-distributed constraint information. Accordingly, each agent maintains estimates of all other agents' actions which are exchanged over a communication graph in the same manner as \eqref{eq:nonlinearPIFeedback} and uses these and an auxiliary variable in place of its own action to compute the gradient of the penalty function. In order to maintain constraint satisfaction, the communication and action updates occur on two different time-scales.

We consider the following feedback law for dynamics \eqref{eq:Plant},\vspace{-0.2cm}
\begin{align} \label{eq:CenterFeedback}
\begin{split}
\epsilon \dot{z}_i &=y_i-z_i\\ 
\epsilon\dot{\textbf{y}}_{-i}^i &= -\mathcal{S}_i\sum_{j\in \mathcal{N}_i}(\textbf{y}^i-\textbf{y}^j)\\
u_i &= -k\Big [\nabla_i \mathcal{J}(y_i,\textbf{y}^i_{-i})+\nabla_i \phi(z_i,\textbf{y}^i_{-i}) ]
\end{split}
\end{align}
where $k>0$ is a parameter to be chosen. Stacking these together yields\vspace{-0.2cm}
\begin{align}
\label{eq:CentreDynamics2}
\Sigma: &\begin{cases}
\epsilon \dot z = y - z\\
\epsilon \dot{\textbf{y}}_{-i} = -\mathcal{S}\textbf{L}\mathcal{S}^\top\textbf{y}_{-i}-\mathcal{S}\textbf{L}\mathcal{R}^\top y\\
\dot x = f(x)\! -\!kG(\textbf{F}(y,\textbf{y}_{-i})\!+\!\boldsymbol{\Psi}(z,\textbf{y}_{-i}))\\
y=G^\top\nabla V(x)
\end{cases}
\end{align}
where $z=\col(z_i)_{i \in \mathcal{I}}$ and $\boldsymbol{\Psi}(z,\textbf{y}_{-i}) = \col(\nabla_i \phi(z_i,\textbf{y}^i_{-i}))_{i\in \mathcal{I}}$.

\begin{asmp} \label{asmp:OSSPassive}
	For each $\mathcal{P}_i$, for every $\bar x_i$, $V_i^{\bar x}$ satisfies 
	\begin{align*}
	\underline{\omega}_i \|x_i-\bar x_i\|^2 \leq V_i^{\bar x_i} &\leq \bar\omega_i \|x_i-\bar x_i\|^2\\
	(\nabla V_i^{\bar x_i})^T(f_i(x_i)+G_iu_i)&\leq -\alpha_i\|x_i-\bar x_i\|^2 + \beta_i \|y_i-\bar y_i\|^2\\
	&\quad+(y_i-\bar y_i)^\top(u_i-\bar u_i)
	\end{align*}
	for $\alpha_i,\beta_i,\underline{\omega}_i,\bar \omega_i>0$.
\end{asmp}

\begin{asmp} \label{asmp:C2}
	For each $\mathcal{P}_i$, $f_i(x_i)$ is $C^2$ in $x_i$ and $V(x)$ is $C^3$. Additionally, $\textbf{F}(y)$ is $C^2$ in its arguments.
\end{asmp}

\begin{remark}
	Assumption \ref{asmp:OSSPassive} strengthens the EIP of Assumption \ref{asmp:nonlinearPlant} and is related to strict-passivity and output-to-state stability. It guarantees that $\bar x_i$ can rendered exponentially stable by a suitable static output feedback.
\end{remark}

\begin{remark}
	Systems \eqref{eq:int} and \eqref{eq:cascadePI}, which satisfy Assumption \ref{asmp:nonlinearPlant}, also satisfy Assumption \ref{asmp:OSSPassive} for any $\beta_i>0$.
\end{remark}

\begin{lemma} \label{lemma:centreInv}
	Under Assumptions \ref{asmp:cost} and \ref{asmp:pseudo}(b)  through \ref{asmp:C2}, for every $x(0) \in \reals^n$  such that $y(0) = G^\top\nabla V(x(0)) \in \interior \Omega$ and for every $(z(0),\textbf{y}_{-i}(0)) \in R_y^{x(0)} := \{(z,\textbf{y}_{-i}):\|z-y(0)\|^2+\|\textbf{y}_{-i}-\mathcal{S}\textbf{1}_N \otimes y(0)\|^2 \leq \gamma\}$, where $\gamma > 0$ is such $R_y^{x(0)}$ is a compact subset of $\Omega_{\textbf{y}} := \{(z,\textbf{y}_{-i}):g_{\ell}(z_i,\textbf{y}_{-i}^i)<0, \forall \ell,i \}$, there exists $\epsilon^{*}>0$ and $k^*>0$ such that for all $0<\epsilon < \epsilon^{*}$ and $k\geq k^*$, $y(t) \in \interior \Omega$ for all $t\geq0$.
\end{lemma}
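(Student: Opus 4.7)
The plan is to extend the practical-set / positive-invariance template of Lemmas \ref{lemma:FI} and \ref{lemma:nonlinearPI} to the two-time-scale system \eqref{eq:CentreDynamics2}, exploiting the separation between the fast variables $(z,\textbf{y}_{-i})$ and the slow plant state $x$. The crux is that the barrier gradient $\boldsymbol{\Psi}(z,\textbf{y}_{-i})$ is evaluated at the surrogate $(z,\textbf{y}_{-i})$ rather than at $y$, so the argument must simultaneously keep $(z,\textbf{y}_{-i})$ in a compact subset of $\Omega_{\textbf{y}}$ where $\boldsymbol{\Psi}$ is Lipschitz and keep $y$ in $\interior\Omega$. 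The admissible ball $R_y^{x(0)}$ is exactly the region from which this bootstrap can be initialized.

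First I would analyze the boundary-layer subsystem at frozen $y$, which is linear and, under Assumption \ref{asmp:graph}, has $(z,\textbf{y}_{-i}) = (y,\mathcal{S}\textbf{1}_N\otimes y)$ as its unique globally exponentially stable equilibrium. The quadratic $W(z,\textbf{y}_{-i},y) = \tfrac12\|z-y\|^2 + \tfrac12\|\textbf{y}_{-i} - \mathcal{S}\textbf{1}_N\otimes y\|^2$ satisfies $dW/d\tau \leq -\kappa W$ in the fast time $\tau = t/\epsilon$, with $\kappa>0$ tied to $\lambda_2(L)$. On the consistency manifold $z = y$, $\textbf{y}_{-i} = \mathcal{S}\textbf{1}_N\otimes y$, one has $\textbf{F}(y,\textbf{y}_{-i}) = F(y)$ and $\boldsymbol{\Psi}(z,\textbf{y}_{-i}) = \nabla\phi(y)$, so the $x$-dynamics reduce (up to the scalar $k$) to the full-information dynamics \eqref{eq:FIdynamics}, to which Lemma \ref{lemma:FI} applies. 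I would then assemble the composite candidate invariant set
\begin{align*}
\textbf{S} = \{(z,\textbf{y}_{-i},x) : W \leq c_W,\ V^{x^*}(x) \leq c_V,\ \phi^{x^*}(x) \leq d\},
\end{align*}
choosing $c_V, d$ first so that $x(0)\in \textbf{S}$ and $\{\phi^{x^*}\leq d\}$ forces $y\in \interior\Omega$ (as in Lemma \ref{lemma:FI}), and then $c_W$ small enough that $\{W\leq c_W\}\subset R_y^{x(0)}\subset \Omega_{\textbf{y}}$. An adaptation of Lemma \ref{lemma:practicalSet2} shows that $\textbf{S}$ is a compact practical set.

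The remaining work is to verify $L_{\boldsymbol{f}}(\cdot) \leq 0$ on each facet of $\partial \textbf{S}$. On $\{W = c_W\}$, $\dot W$ along \eqref{eq:CentreDynamics2} equals $-(2\kappa/\epsilon)c_W$ plus an $O(k)$ cross-term coming from $\dot y$, and $\epsilon^*$ is chosen after $k^*$ to cancel this. On $\{V^{x^*} = c_V\}$, equilibrium-independent passivity in the output-to-state form of Assumption \ref{asmp:OSSPassive} yields the monotonicity-based nonpositive term $-k(y-y^*)^\top(F(y)-F(y^*)+\nabla\phi(y)-\nabla\phi(y^*))$, a strict decay $-k\alpha\|x-x^*\|^2$ from the OSS term, and a surrogate-gradient error proportional to $k(L_{\textbf{F}} + L_\phi(c_W))\sqrt{c_W}$; the decay absorbs the error once $k\geq k^*$ and $c_W$ is small. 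On $\{\phi^{x^*} = d\}$, case (2) of Lemma \ref{lemma:FI} applies with an additional mismatch term bounded by the same Lipschitz constants, giving the threshold $d\geq d_c$. Combining via Lemmas \ref{thm:tangent} and \ref{thm:invariant} yields positive invariance of $\textbf{S}$, whence $y(t)\in \interior\Omega$ for all $t\geq 0$.

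The hard part will be sequencing the parameter choices consistently: the cross-term on $\{W = c_W\}$ grows with $k$, tightening the upper bound on $\epsilon$, while the $V^{x^*}$-facet needs $k$ large to dominate the surrogate-gradient mismatch. The consistent ordering is $(c_V, d) \to k^* \to c_W \to \epsilon^*$. A second delicate point is bounding the Lipschitz constant $L_\phi(c_W)$ of $\nabla\phi$ on $\{W\leq c_W\}$; this is finite because $\{W\leq c_W\}\subset R_y^{x(0)}$ keeps us strictly inside $\Omega_{\textbf{y}}$ and away from the barrier singularity, which is precisely why the admissible initial set is the compact ball $R_y^{x(0)}$ rather than all of $\Omega_{\textbf{y}}$.
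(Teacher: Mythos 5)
Your route is genuinely different from the paper's. The paper does not build an invariant set for the full $(z,\textbf{y}_{-i},x)$ system at all: it freezes $\epsilon=0$, shows the boundary-layer system \eqref{eq:boundary} is (linear and) exponentially stable, shows the reduced system \eqref{eq:reduced} has $x^*$ exponentially stable for $k\geq k^*=\beta/\mu$ via Assumption \ref{asmp:OSSPassive}, verifies the smoothness/boundedness hypotheses on compact subsets of $\Omega_{\textbf{y}}\times\Omega_x$, and then invokes the infinite-horizon singular-perturbation approximation theorem (Theorem 11.2 in \cite{hk02}) to get $\|x(t)-\hat x(t)\|\leq \epsilon\kappa_1$ uniformly in $t$. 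Constraint satisfaction then follows from Lemma \ref{lemma:FI} (invariance of $\Omega_x$ for the reduced system) by choosing $\epsilon^*\leq \kappa_x/\kappa_1$ with $\kappa_x=\inf_{t\geq0}\|\hat x(t)\|_{\partial\Omega_x}$. Your plan replaces this trajectory-closeness argument by a Nagumo-type invariance proof for a composite practical set $\textbf{S}=\{W\leq c_W,\ V^{x^*}\leq c_V,\ \phi^{x^*}\leq d\}$.

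The gap is in the initialization of the fast variables. Your facet analysis on $\{V^{x^*}=c_V\}$ and $\{\phi^{x^*}=d\}$ forces $c_W$ to be small: the surrogate-gradient mismatch there is of order $k\sqrt{c_W}\,\|y-y^*\|$, and dominating it by $-(k\mu-\beta)\|y-y^*\|^2-\alpha\|x-x^*\|^2$ requires roughly $c_W\lesssim \alpha c_V/(k\bar\omega)$, i.e., $c_W$ shrinking with $k$. But the lemma admits any $(z(0),\textbf{y}_{-i}(0))\in R_y^{x(0)}$, where $\gamma$ is \emph{given} (only required to keep $R_y^{x(0)}$ compactly inside $\Omega_{\textbf{y}}$), so $W(0)$ can be of order $\gamma\gg c_W$ and the initial condition simply does not lie in your invariant set $\textbf{S}$. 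Invariance of $\textbf{S}$ therefore does not cover the stated class of initial conditions; you would additionally need a boundary-layer transient argument showing that, for $\epsilon$ small, the trajectory enters $\{W\leq c_W\}$ in $O(\epsilon)$ time while $x$ (hence $y$, $V^{x^*}$, $\phi^{x^*}$) drifts only $O(k\epsilon)$ and $y$ stays in $\interior\Omega$ --- which is exactly the time-scale-separation content that the paper outsources to Theorem 11.2. A secondary imprecision: $\{W\leq c_W\}$ is a ball around the \emph{current} $(y(t),\mathcal{S}\textbf{1}_N\otimes y(t))$, so the containment you want is not $\{W\leq c_W\}\subset R_y^{x(0)}$ but rather that the $\sqrt{2c_W}$-tube around the consistency manifold, taken over the compact set of outputs generated by $\{\phi^{x^*}\leq d\}$, stays in a compact subset of $\Omega_{\textbf{y}}$ (this is fixable, since that manifold lies in $\Omega_{\textbf{y}}$ whenever $y\in\interior\Omega$, but it is not what you wrote). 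With the transient argument added and the tube containment corrected, your invariant-set route should go through, but as written it does not prove the lemma for the admitted initial conditions.
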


\begin{proof}
First, we freeze $\epsilon = 0$ and analyze the behaviours of the reduced system and boundary layer system separately. Consider the boundary layer system. On the fast time-scale, by treating $y$ as fixed and the change of coordinates $z \mapsto \tilde z := z - y$, $\textbf{y}_{-i} \mapsto \tilde{\textbf{y}}_{-i} := \textbf{y}_{-i} - \mathcal{S}\textbf{1}_N\otimes y$, it becomes\vspace{-0.2cm}
\begin{align} \label{eq:boundary}
\begin{split}
\frac{d\tilde z}{d\tau}&=-\tilde z\\
\frac{d\tilde{\textbf{y}}_{-i}}{d\tau} &= -\mathcal{S}\textbf{L}\mathcal{S}^\top \tilde{\textbf{y}}_{-i}
\end{split}
\end{align}

The is an asymptotically stable linear system and thus $(\tilde z,\tilde{\textbf{y}}_{-i}) = (0,0)$ is an exponentially stable equilibrium point of \eqref{eq:boundary}. Next, consider the reduced system \eqref{eq:CentreDynamics2}. By treating $(z,\textbf{y}_{-i}) = (y,\mathcal{S}\textbf{1}_N\otimes y)$, the $x$ dynamics \eqref{eq:CentreDynamics2} become\vspace{-0.2cm}
\begin{align} 
\label{eq:reduced}
\begin{split}
\dot x &= f(x) -kG(F(y)+\nabla \phi(y))\\
y&=G^\top\nabla V(x)
\end{split}
\end{align}
similar to \eqref{eq:FIdynamics}.

Consider the Lyapunov candidate function $V^{x^*} := V(x)-V(x^*) - \nabla V(x^*)^\top(x-x^*)$, where $x^* = \pi(y^*)$. Taking the derivative along the solutions of \eqref{eq:reduced}, by Assumption \ref{asmp:OSSPassive},\vspace{-0.2cm}
\begin{align}
\begin{split}
\dot V &\leq -\alpha \|x - x^*\|^2+\beta \|y-y^*\|^2 +(y-y^*)^\top u\\
&\leq-\alpha \|x - x^*\|^2+\beta \|y-y^*\|^2\\
&\quad -(y-y^*)^\top(F(y)-F(y^*)+\nabla \phi(y)-\nabla \phi(y^*))\\
&\leq -\alpha \|x - x^*\|^2+\beta \|y-y^*\|^2 -\!k\mu \|y-y^*\|^2
\end{split}
\end{align}
where $\alpha = \min\{\alpha_i:i\in\mathcal{I}\}$ and $\beta = \max\{\beta_i:i\in\mathcal{I}\}$. The last inequality follows from strong-monotonicity of $F(y)+\nabla \phi(y)$. If $k\geq k^*:= \frac{\beta}{\mu}$, then $\dot V \leq -\alpha \|x - x^*\|^2$.
Thus, by Theorem 4.10 in \cite{hk02}, $x=x^*$ is an exponentially stable equilibrium of \eqref{eq:reduced}.

Second, for $\Sigma$ \eqref{eq:CentreDynamics2}, we must check properties of $f_1(z,\textbf{y}_{-i},x) := \col(-z+G^\top\nabla V(x),-\mathcal{S}\textbf{L}\mathcal{S}\textbf{y}_{-i}-\mathcal{S}\textbf{L}\mathcal{R}^\top G^\top \nabla V(x))$ and $f_2(z,\textbf{y}_{-i},x) := f(x)-kG(\textbf{F}(y,\textbf{y}_{-i})+\boldsymbol{\Psi}(z,\textbf{y}_{-i}))$.  $f_1$ is $C^1$ and Lipschitz continuous in $(z,\textbf{y}_{-i},x)$ since $\nabla V(x)$ is Lipschitz continuous and $C^1$. Therefore on any compact subset of $\reals^{M} \times \Omega_x$, $\Omega_x = \{x\in \reals^n: G^\top\nabla V(x) \in \interior \Omega\}$, $f_1$ and its partial derivatives are continuous and bounded. Under Assumptions \ref{asmp:nonlinearPlant} and \ref{asmp:OSSPassive}, $f_2$ is $C^1$, locally Lipschitz with locally Lipschitz partial derivatives. Therefore, $f_2$ and its partial derivatives are continuous and bounded on any compact subset of $\Omega_{\textbf{y}} \times \Omega_x$, where $\Omega_{\textbf{y}} := \{(z,\textbf{y}_{-i}):  \mathcal{R}_i^\top z_i +\mathcal{S}_i^\top \textbf{y}_{-i}^i \in \interior \Omega, \forall i\}$. Additionally, under Assumption \ref{asmp:nonlinearPlant}, $h(x):=\col(G^\top \nabla V(x),\mathcal{S}\textbf{1}_N\otimes (G^\top \nabla V(x)))$, $\frac{\partial f_1}{\partial z}$ and $\frac{\partial f_1}{\partial \textbf{y}_{-i}}$ have bounded first partial derivatives on any compact subset of $\Omega_{\textbf{y}} \times \Omega_x$. Finally,\vspace{-0.2cm}
\begin{align*}
&\frac{\partial f_2(y,\textbf{1}_N\!\otimes\! y,x)}{\partial x}\!=\! \frac{\partial f}{\partial x}\!-\!kG\Big[\frac{\partial F}{\partial y}G^\top \nabla^2 V\!+\! \frac{\partial^2 \phi}{\partial y^2}G^\top \nabla^2 V \Big]
\end{align*}
Under Assumption \ref{asmp:C2}, $\frac{\partial f_2(y,\textbf{1}\otimes y,x)}{\partial x}$ is Lipschitz continuous on any compact subset of $\Omega_x$.

Therefore, by Theorem 11.2 in \cite{hk02}, for each $k\geq k^*$ and for each compact subset $\bar \Omega \subset \Omega_x$, there exists $\bar\epsilon>0$ such that for all $x(0) \in \bar \Omega$ and for all $0<\epsilon<\bar\epsilon$ \eqref{eq:CentreDynamics2} has a unique solution $x(t)$ and $(z(t),\textbf{y}_{-i}(t))$ and for some $\kappa_1>0$\vspace{-0.2cm}
\begin{align*}
\|x(t)-\hat x(t)\|\leq \epsilon \kappa_1
\end{align*}
where $\hat x(t)$ is the solution to the reduced system. By Lemma \ref{lemma:FI}, $\Omega_x$ is positively invariant for the reduced system. Consider $\kappa_x:=\inf_{t\geq0}\|\hat x(t)\|_{\partial \Omega_x}$ and let $\epsilon^{*}=\min\{\bar\epsilon,\frac{\kappa_x}{\kappa_1}\}$. Then for all $0<\epsilon<\epsilon^{*}$ we have that $\|x(t)-\hat x(t)\|<\kappa_x$ and thus $x(t)\in \Omega_x$ and $y(t) \in \interior \Omega$ for all $t\geq 0$.
\end{proof}

\begin{thm}
Under Assumptions \ref{asmp:cost} and \ref{asmp:pseudo}(b) through \ref{asmp:C2}, there exists $k^*>0$ and $\epsilon^{**}>0$ such that for all $k\geq k^*$ and $0<\epsilon<\epsilon^{**}$, the equilibrium $(\bar z,\bar{\textbf{y}}_{-i},x^*)=(y^*,\mathcal{S}\textbf{1}_N\otimes y^*,\pi(y^*))$ of \eqref{eq:CentreDynamics2}, where $y^*$ is the NE of \eqref{eq:NE}  and $\varepsilon$-vGNE of \eqref{eq:GNE}, is exponentially stable. Moreover, for all $x(0)$ such that $y(0)\in \interior \Omega$, for all $(z(0),\textbf{y}_{-i}(0) \in R_y^{x(0)}$, the constraint $y(t)\in \Omega$ is satisfied for all $t\geq0$.
\end{thm}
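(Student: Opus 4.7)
The plan is to exploit singular perturbation theory, specifically a standard two-time-scale result such as Theorem~11.4 in \cite{hk02}, to combine exponential stability of the boundary-layer and reduced systems (both of which were essentially already set up inside the proof of Lemma~\ref{lemma:centreInv}) into local exponential stability of the full system \eqref{eq:CentreDynamics2}. Constraint satisfaction will then be inherited directly from Lemma~\ref{lemma:centreInv}, so only the stability claim requires new work.

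First, I would revisit the boundary-layer system. With the shifts $\tilde z := z - y$ and $\tilde{\textbf{y}}_{-i} := \textbf{y}_{-i} - \mathcal{S}\textbf{1}_N\otimes y$, it reduces to the LTI system \eqref{eq:boundary} with diagonal blocks $-I$ and $-\mathcal{S}\textbf{L}\mathcal{S}^\top$. The latter is positive definite under Assumption~\ref{asmp:graph}: $\ker \textbf{L} = \{\textbf{1}_N\otimes v : v\in\reals^m\}$, and a nonzero vector in $\mathrm{image}\,\mathcal{S}^\top$ has a zero block so cannot lie in that kernel. The quadratic Lyapunov function $W(\tilde z,\tilde{\textbf{y}}_{-i}) = \tfrac{1}{2}(\|\tilde z\|^2 + \|\tilde{\textbf{y}}_{-i}\|^2)$ then yields exponential decay uniformly in the frozen slow variable $y$, supplying the boundary-layer hypothesis of the singular perturbation theorem.

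Next, for the reduced system \eqref{eq:reduced}, the exponential-stability analysis is already inside Lemma~\ref{lemma:centreInv}: for $k\geq k^*=\beta/\mu$, the storage function $V^{x^*}$ satisfies $\dot V^{x^*} \leq -\alpha \|x - x^*\|^2$ thanks to Assumption~\ref{asmp:OSSPassive} and strong monotonicity of $F+\nabla\phi$. Combined with the two-sided quadratic bound $\underline{\omega}\|x-x^*\|^2 \leq V^{x^*} \leq \bar\omega\|x-x^*\|^2$ from Assumption~\ref{asmp:OSSPassive}, this supplies the slow-side hypothesis. Assumptions~\ref{asmp:nonlinearPlant} and~\ref{asmp:C2} give enough smoothness of $f$, $\nabla V$, $\textbf{F}$ and $\boldsymbol{\Psi}$ on the open set where $y\in\interior\Omega$ to obtain interconnection bounds of the form $\bigl|\text{cross terms}\bigr| \leq C(\|\tilde z\| + \|\tilde{\textbf{y}}_{-i}\|)\|x - x^*\|$ on a compact neighbourhood of the equilibrium, via the Mean Value Theorem. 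Forming the composite Lyapunov function $U = (1-\nu) V^{x^*} + \nu W$ for some $\nu \in (0,1)$ and applying Young's inequality produces $\dot U \leq -c U$ for some $c>0$, provided $\epsilon$ is below an explicit threshold $\epsilon_{\text{sp}}$ depending on $C, \alpha, \mu, k$ and the smallest positive eigenvalue of $\mathcal{S}\textbf{L}\mathcal{S}^\top$. Setting $\epsilon^{**} := \min\{\epsilon^{*},\epsilon_{\text{sp}}\}$ with $\epsilon^{*}$ from Lemma~\ref{lemma:centreInv} then delivers local exponential stability, and constraint satisfaction for initial conditions $x(0)$, $(z(0),\textbf{y}_{-i}(0))\in R_y^{x(0)}$ is provided by Lemma~\ref{lemma:centreInv}.

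The main obstacle I expect is the cross-coupling bookkeeping: because $\textbf{F}$ is not monotone on the extended space, one cannot repeat the clean monotonicity cancellation that drove the reduced system, so the coupling of $x$ into $(\tilde z, \tilde{\textbf{y}}_{-i})$ through $y = G^\top \nabla V(x)$ and $\boldsymbol{\Psi}(z,\textbf{y}_{-i})$ has to be absorbed into the negative definite parts of $\dot V^{x^*}$ and $\dot W$ via Young's inequality. The smallness of $\epsilon$ required for that absorption is precisely what fixes $\epsilon^{**}$, and verifying that this threshold is compatible with the constraint-satisfaction bound $\epsilon^{*}$ from Lemma~\ref{lemma:centreInv} is the one place where the proof must be done on a common compact neighbourhood of the equilibrium rather than purely locally.
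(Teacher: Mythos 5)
Your proposal is correct and follows essentially the same route as the paper: exponential stability of the reduced system (for $k\geq k^*=\beta/\mu$, from Lemma~\ref{lemma:centreInv}) and of the boundary-layer system are combined via the singular perturbation result (Theorem~11.4 in \cite{hk02}), and $\epsilon^{**}=\min\{\hat\epsilon,\epsilon^*\}$ with constraint satisfaction inherited from Lemma~\ref{lemma:centreInv}. The composite-Lyapunov/interconnection-bound construction you sketch is simply the internal machinery of that cited theorem, which the paper invokes directly rather than reproving.
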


\begin{proof}
	Following from the proof of Lemma \ref{lemma:centreInv}, if $k\geq k^*:=\frac{\beta}{\mu}$, then $x=x^*$ is exponentially stable for the reduced system, \eqref{eq:reduced}. $(\tilde z,\tilde{\textbf{y}}_{-i})=(0,0)$ is exponentially stable for the boundary-layer system, \eqref{eq:boundary}. Therefore, by Theorem 11.4 in \cite{hk02}, there exists $\hat\epsilon>0$ such that for all $0<\epsilon<\hat\epsilon$, $(z,\bar{\textbf{y}}_{-i},x^*)=(y^*,\mathcal{S}\textbf{1}_N\otimes y^*,x^*)$ is an exponentially stable equilibrium of \eqref{eq:CentreDynamics2}.
	
	Moreover, by Lemma \ref{lemma:centreInv}, for all $(z(0),\textbf{y}_{-i}(0),x(0))$ such that $y(0)\in \interior \Omega$ and $(z(0),\textbf{y}_i(0)) \in R_y^{x(0)}$, there exists $\epsilon^{*}>0$ such that the constraint $y(t)\in\Omega$ is satisfied for all $t\geq0$. Then take $\epsilon^{**} = \min\{\hat\epsilon,\epsilon^*\}$.
\end{proof}

\section{OSNR Example} \label{sec:OSNR}
Consider an optical-signal-to-noise ratio (OSNR) model for wave division multiplexing links with ten channels. Each channel chooses its transmission power $y_i$ in order to maximize its signal-to-noise ratio. The link is assumed to have a maximum transmission power, $P_0$. This leads to a game given by the following set of optimization problems \vspace{-0.2cm}
\begin{align} \label{eq:osnrGNE}
\begin{split}
\min_{y_i}\quad &a_iy_i-b_i\ln\Big(1+c_i \frac{y_i}{n_i^0+\sum_{ j \neq i }\Gamma_{ij} y_j}\Big)\\
\st\quad  &0\leq y_i\\
&\textstyle{\sum_{j\in\mathcal{I}}} y_j \leq P_0
\end{split}
\end{align}
where $a_i>0$ is a pricing parameter, $b_i>0$,  $\Gamma = [\Gamma]_{ij}$ is the link system matrix and $n_i^0$ is the channel noise power, with parameters as in \cite{sp16}. 
We consider that each autonomous agent uses a fully-distributed partial-information gradient-play scheme, with agent dynamics given by \eqref{eq:int} and \eqref{eq:CenterFeedback}. In order to get the action information, action estimates are communicated over graph $\mathcal{G}_c$, Fig. \ref{fig:GraphOSNR}. 
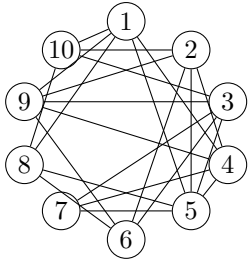
\begin{figure}[h!]
	\vspace{-0.26cm}
	\centering
	\begin{tikzpicture}[auto, node distance=1cm]
	\node [gnode] (p1) {$1$};
	\node [gnode, below left = 0.0175 and 0.5 cm of p1] (p10) {$10$};
	\node [gnode, below right = 0.0175 and 0.5 cm of p1] (p2) {$2$};
	\node [gnode, below right =  0.325 cm and 0.125 cm of p2] (p3) {$3$};
	\node [gnode, below = 0.325 cm of p3] (p4) {$4$};
	\node [gnode, below left = 0.25 cm and 0.125 cm of p4] (p5) {$5$};
	\node [gnode, below left = 0.0175 and 0.5 cm of p5] (p6) {$6$};
	\node [gnode, above left = 0.0175 and 0.5 cm of p6] (p7) {$7$};
	\node [gnode, above left =  0.25 cm and 0.125 cm of p7] (p8) {$8$};
	\node [gnode, above = 0.325 cm of p8] (p9) {$9$};
	
	\draw [-] (p1) -- node []{}(p8);
	\draw [-] (p1) -- node []{}(p5);
	\draw [-] (p1) -- node []{}(p4);
	\draw [-] (p1) -- node []{}(p9);
	\draw [-] (p1) -- node []{}(p10);
	
	\draw [-] (p2) -- node []{}(p4);
	\draw [-] (p2) -- node []{}(p5);
	\draw [-] (p2) -- node []{}(p6);
	\draw [-] (p2) -- node []{}(p9);
	\draw [-] (p2) -- node []{}(p10);
	
	\draw [-] (p3) -- node []{}(p5);
	\draw [-] (p3) -- node []{}(p6);
	\draw [-] (p3) -- node []{}(p7);
	\draw [-] (p3) -- node []{}(p9);
	\draw [-] (p3) -- node []{}(p10);
	
	\draw [-] (p4) -- node []{}(p5);
	\draw [-] (p4) -- node []{}(p7);
	\draw [-] (p4) -- node []{}(p9);
	
	\draw [-] (p5) -- node []{}(p7);
	\draw [-] (p5) -- node []{}(p8);
	
	\draw [-] (p6) -- node []{}(p8);
	\draw [-] (p6) -- node []{}(p9);
	
	\draw [-] (p8) -- node []{}(p10);

	\end{tikzpicture}
	\caption{Communication graph, $\mathcal{G}_c$, $\lambda_2 = 2.6158$} 
	\label{fig:GraphOSNR}
\end{figure}

Fig. \ref{fig:OSNR2} shows transmission power, $y_i$, for each agent over time using $v_i=1$, $k_i=\frac{1}{2}$ and $\rho = 0.1$. Each agent always has positive power and the total power usage on the link is less than the maximum, Fig. \ref{fig:TOTPOW2}, meaning the constraints are satisfied for all time.
\begin{figure}[h!]
	\vspace{-0.3cm}	
	\centering
	\includegraphics[trim=0cm 0cm 0cm 0cm,width=2.8in]{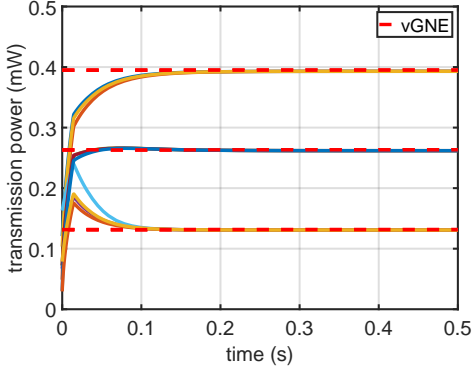}
	\caption{Individual transmission powers, $y_i$}\label{fig:OSNR2}  
	\vspace{-0.3cm}	
\end{figure}
\begin{figure}[h!]
	\vspace{-0.3cm}	
	\centering
	\includegraphics[trim=0cm 0cm 0cm 0cm,width=2.8in]{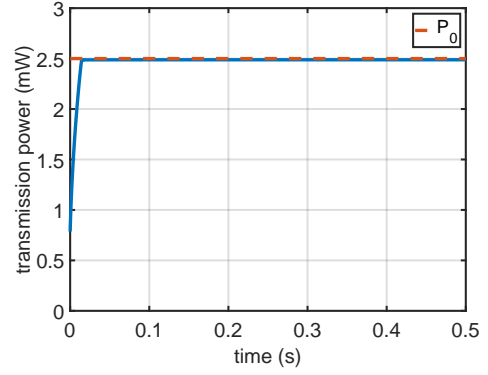}
	\caption{Total power usage, $\sum_{i\in\mathcal{I}} y_i$}\label{fig:TOTPOW2}  
	\vspace{-0.3cm}	
\end{figure}
\section{Velocity Synchronization} \label{sec:robots}
Next, we consider a velocity synchronization problem for a group of flexible mobile robots. Consider a group of five flexible mobile robots moving in a line. Each robot is modelled as two masses connected by a non-linear spring, with force $\psi_i(\cdot)$ and damper with coefficient $\gamma_i$. Let $d_i$ denote robot $i$'s position, mass $M_i$, and $\eta_i$ the position of its appendage, mass $m_i$. A force $u_i$ is applied to mass $M_i$. Letting $x_i^1 = d_i-\eta_i$, $x_i^2 = \dot d_i$ and $x_i^3 = \dot \eta_i$, the dynamics of robot $i$ are given by\vspace{-0.2cm}
\begin{align} \label{eq:MobileRobot}
\mathcal{P}_i: \begin{cases}
\dot x_i^1 = x_i^2-x_i^3\\
M_i\dot x_i^2 = -\psi_i(x_i^1)-\eta_i(x_i^2-x_i^3)+u_i\\
m_i\dot x_i^3 = \psi_i(x_i^1)+\eta_i(x_i^2-x_i^3)\\
y_i = x_i^2
\end{cases}
\end{align}
It can be shown that if $\psi_i$ is strongly-monotone and Lipschitz continuous and $\psi_i(0)=0$  , \eqref{eq:MobileRobot} satisfies Assumption \ref{asmp:nonlinearPlant} with $V_i(x_i) = \int_0^{x_i^1} \psi_i(x_i^1) dx_i^1 + \frac{M_i}{2}(x_i^2)^2+ \frac{m_i}{2}(x_i^3)^2$. 

We consider the following leader-follower problems: \vspace{-0.2cm}
\begin{align} \label{eq:sensorNetGNE}
\begin{split}
\min_{y_i}\quad &(y_i-y_{i-1})^2\\
\st\quad  &(y_i-y_{i-1})^2\leq \delta_{i,\text{max}}^2
\end{split}
\end{align}
where $y_0 := v_0$, the reference velocity for the leader. 

We simulate a group of five robots using the full-information feedback \eqref{eq:nonlinFIfeedback} with $v_0 = 3$, $d_{i,\text{max}} = d_{\text{max}} = 3$, $M_i=m_i=1$, $\eta_i=1$, $\psi_i(x_i^1)=-x_i^1-\text{atan}(x_i^1)$ and $\rho = 0.1$. Figure \ref{fig:robots} shows that the robots synchronize to $v_0=3$. Figure \ref{fig:robotdistance} shows that the velocity difference between neighbours never exceeds $d_{\text{max}}$.
\begin{figure}[h!]
	\vspace{-0.3cm}	
	\centering
	\includegraphics[trim=0cm 0cm 0cm 0cm,width=2.8in]{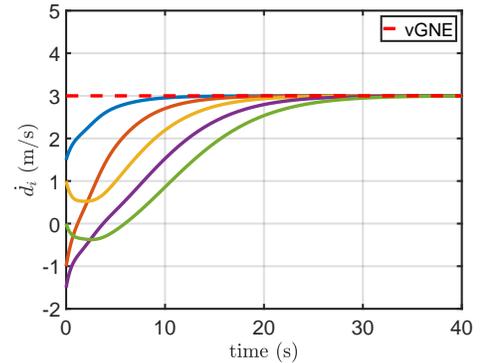}
	\caption{Velocities, $\dot d_i$, of the five robots}\label{fig:robots}  
	\vspace{-0.3cm}	
\end{figure}
\begin{figure}[h!]
	\vspace{-0.3cm}	
	\centering
	\includegraphics[trim=0cm 0cm 0cm 0cm,width=2.8in]{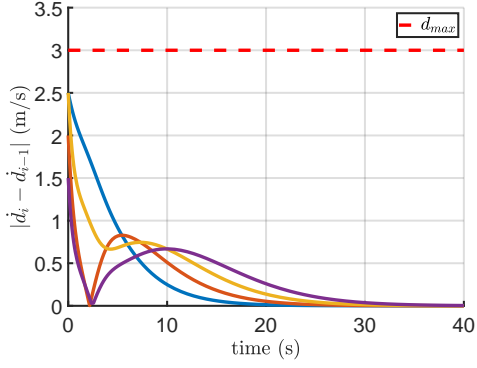}
	\caption{Difference in velocities between neighbours over time}\label{fig:robotdistance}  
	\vspace{-0.3cm}	
\end{figure}

\section{Conclusions}
In this paper, we present a novel approach to solving GNE problem using an inexact penalty function on the inequality constraints that converts the GNE seeking problem into an NE seeking problem. We then consider full- and partial-information gradient-play feedbacks for dynamic agents with passive dynamics. We prove both convergence to the NE and constraint satisfaction for all time.

Future work will explore using a time-varying penalty function to guarantee exact convergence to the vGNE. Additionally, we will consider other types of agent dynamics and time-varying cost-functions and constraints.

\appendix

\subsection{Proof of Lemma \ref{lemma:FIEq}} \label{appendix:FIEq}
	By Assumptions \ref{asmp:cost} and \ref{asmp:pseudo}, $y^*$ is the unique point such that $F(y^*)\!+\!\nabla \phi(y^*) \!=\! 0$. Let $x^* \!=\! \pi(y^*)$. By Assumption \ref{asmp:nonlinearPlant} and \eqref{eq:NEcondition},\vspace{-0.2cm}
\begin{align} \label{eq:EQ}
f(\pi(y^*))-G(F(y^*)+\nabla \phi(y^*)) = 0
\end{align}
therefore $x^* = \pi(y^*)$ is an equilibrium point of \eqref{asmp:nonlinearPlant}. Now suppose there is another equilibrium $\bar x$, we have\vspace{-0.2cm}
\begin{align*}
0 &= f(\bar x)-G(F(\bar y) + \nabla \phi(\bar y))\\
\bar y &= G^\top \nabla V(\bar x)
\end{align*}
Let $w = \nabla V(\bar x)$ by strong-monotonicity of Lipschitz continuity of $\nabla V$, $\nabla V$ is invertible. Thus, we have \vspace{-0.2cm}
\begin{align*}
f(\nabla V^{-1}(w)) = G (F(G^\top w) + \nabla \phi(G^\top w))
\end{align*}
Let $w^* = \nabla V(x^*)$. From \eqref{eq:EQ}, we get \vspace{-0.2cm}
\begin{align*}
&f(\nabla V^{-1}(w))-f(\nabla V^{-1}(w^*))\\
&\quad= G [F(G^\top w)\! +\! \nabla \phi(G^\top w)\!-\!F(G^\top w^*)\!-\!\nabla \phi(G^\top w^*)]
\end{align*}
Left multiplying by $(w-w^*)^\top$ gives \vspace{-0.2cm}
\begin{align*}
&(w-w^*)^\top [f(\nabla V^{-1}(w))-f(\nabla V^{-1}(w^*))]= (w\!-\!w^*)^\top G \\
&\quad[F(G^\top w)\! +\! \nabla \phi(G^\top w)\!-\!F(G^\top w^*)\!-\!\nabla \phi(G^\top w^*)]
\end{align*}
By monotonicity of $F+\nabla \phi$ and of $-f\circ \nabla V^{-1}$, we have \vspace{-0.2cm}
\begin{align*} 
&0\geq (w-w^*)^\top f(\nabla V^{-1}(w))-f(\nabla V^{-1}(w^*))\\
&= (w-w^*)^\top G [F(G^\top w)\! +\! \nabla \phi(G^\top w)\\
&\qquad -\!F(G^\top w^*)\!-\!\nabla \phi(G^\top w^*)] \geq 0
\end{align*}
Then, by strict-monotonicity of $F$, $G^\top w = G^\top w^*$, implying $\bar y = y^*$, NE of \eqref{eq:NE} and $\varepsilon$-vGNE of \eqref{eq:GNE} by Lemma \ref{lemma:epsilon}. Thus, by Lemma \ref{lemma:EqUnique}, $\bar x = x^*$. Therefore, $x^* = \pi(y^*)$ is the unique equilibrium of \eqref{eq:FIdynamics}.

\subsection{Proof of Lemma \ref{lemma:practicalSet}} \label{appendix:practicalSet}
	First, we show that $S \subset \Omega_x:=\{x\in \reals^n:g(G^\top \nabla V(x))<0\}$ is compact. Note that $S = \{x \in \reals: V^{x^*}(x) - c \leq 0\} \cap \{x \in \reals: \phi^{x^*}(x) - d \leq 0\}$. Which is closed, since both functions are continuous. Since, the first set is compact, the intersection of the two is also compact.

Next, cf. Definition \ref{def:practicalSet}, Condition 2, we need to show that for all $x \in S$, there exists $z \in \reals^n$ such that $V^{x^*}(x)-c+\nabla V^{x^*}(x)^\top z<0$ and $\phi^{x^*}(x)-d + \nabla \phi^{x^*}(x)^\top z < 0$.
Now, consider $x \in S$. There are four cases to be considered.
\begin{enumerate}
	\item $V^{x^*}(x)-c < 0$, $\phi^{x^*}(x)-d < 0$\\
	In this case, the inequalities hold trivially for $z=0$.
	\item $V^{x^*}(x)-c < 0$, $\phi^{x^*}(x)-d = 0$\\
	Let $z=-G\eta $, where $\eta =( \nabla \phi(G^\top \nabla V(x))-\nabla \phi(G^\top \nabla V(x^*))$. Then \vspace{-0.2cm}
	\begin{align*}
	&V^{x^*}(x)-c+\nabla V^{x^*}(x)^\top z=\\
	&V^{x^*}(x)-c - (\nabla V(x) -\nabla V(x^*))^\top G\eta < 0
	\end{align*}
	by strict-convexity of $\phi$ and the fact that $G^\top \nabla V(x) \neq G^\top \nabla V(x^*)$ for all $x$ such that $\phi^{x^*}(x)=d$. Additionally,\vspace{-0.2cm}
	\begin{align*}
	&\phi^{x^*}(x)-d + \nabla \phi^{x^*}(x)^\top z =-\eta^\top G^\top\nabla^2 V(x)^\top G\eta< 0
	\end{align*}
	by convexity of $\phi(y)$, strong-convexity of $V(x)$, full-column rank of $G$ and since $G^\top \nabla V(x) \neq G^\top \nabla V(x^*)$ when $\phi^{x^*}(x)>0$.
	\item $V^{x^*}(x)-c = 0$, $\phi^{x^*}(G^\top \nabla V(x))-d < 0$\\
	Let $z = -a(\nabla V(x)-\nabla V(x^*))$, where $0<a<\frac{d-\phi^{x^*}(x)}{|\nabla( \phi^{x^*}(x))^\top(\nabla V(x)-\nabla V(x^*))|}$. This gives \vspace{-0.2cm}
	\begin{align*}
	&V^{x^*}(x)-c+\nabla V^{x^*}(x)^\top z =\\
	&-a(\nabla V(x) -\nabla V(x^*))^\top(\nabla V(x)-\nabla V(x^*)) < 0
	\end{align*}
	by strong-convexity of $V(x)$. Furthermore, \vspace{-0.2cm}
	\begin{align*}
	&\phi^{x^*}(x)-d + \nabla \phi^{x^*}(x)^\top z =\\
	&\phi^{x^*}(x)-d - a \nabla \phi^{x^*}(x)^\top (\nabla V(x)-\nabla V(x^*))\\&\leq  \phi^{x^*}(x)-d + a|\nabla \phi^{x^*}(x)^\top (\nabla V(x)-\nabla V(x^*))|<0
	\end{align*}
	\item $V^{x^*}(x)-c = 0$, $\phi^{x^*}(G^\top \nabla V(x))-d = 0$\\
	Let $z$ as in Case 2. Then 
	$V^{x^*}(x)-c+\nabla V^{x^*}(x)^\top z<0$ and
	$\phi^{x^*}(x)-d + \nabla \phi^{x^*}(x)^\top z<0$
	as in Case 2.
\end{enumerate}
Next, to show condition (3) in Definition \ref{def:practicalSet}, consider the vector field $f_0(x) = (\phi^{x^*}(x)-d)(\nabla V(x) - \nabla V(x^*))-G\eta$, where $\eta = ( \nabla \phi(G^\top \nabla V(x))-\nabla \phi(G^\top \nabla V(x^*))$ and $a \in \reals$, which is Lipschitz continuous on $S$ since $S \subset \Omega_x$ is compact and $f_0$ is $C^1$ on $\Omega_x$. Here, we need to check three cases
\begin{enumerate}
	\item $V^{x^*}(x)-c < 0$, $\phi^{x^*}(x)-d = 0$\vspace{-0.2cm}
	\begin{align} \label{eq:Vec1}
	L_{f_0}\phi^{x^*}= -\eta^\top G^\top \nabla^2 V(x) G\eta< 0
	\end{align}
	by $\phi(x)$ strictly convex, $V(x)$ strongly-convex, rank of $G$ and that $G^\top \nabla V(x) \neq G^\top \nabla V(x^*)$ when $\phi^{x^*}(x)>0$.
	\item $V^{x^*}(x)-c = 0$, $\phi^{x^*}(x)-d < 0$\vspace{-0.2cm}
	\begin{align*}
	L_{f_0}V^{x^*}=&(\phi^{x^*}(x)-d)\|\nabla V(x)-\nabla V(x^*)\|^2\\& -\!(\nabla V(x)\!-\!\nabla V(x^*))^\top G\eta< 0
	\end{align*}
	by strong-convexity of $V(x)$, $\phi^{x^*}(x)<d$ and convexity of $\phi(y)$.
	\item $V^{x^*}(x)-c = 0$, $\phi^{x^*}(x)-d = 0$\\
	We have $L_{f_0}\phi^{x^*}< 0$
	as in \eqref{eq:Vec1}. Furthermore \vspace{-0.2cm}
	\begin{align*}
	L_{f_0}V^{x^*} \!=\!&-\!(\nabla V(x)\!-\!\nabla V(x^*))^\top G\eta< 0
	\end{align*}
	by $\phi(y)$ strictly-convex and $G^\top\nabla V(x) \neq G^\top \nabla V(x^*)$ when $\phi^{x^*}(x)>0$.
\end{enumerate}
Therefore, by Definition \ref{def:practicalSet}, we have that $S$ is a practical set.

\subsection{Proof of Lemma \ref{lemma:PIEq}} \label{appendix:PIEq}
First, we show that $(\mathcal{S}\textbf{1}_N \otimes y^*, \pi(x^*))$ is an equilibrium point of \eqref{eq:FIdynamics}. At $(\mathcal{S}\textbf{1}_N \otimes y^*, \pi(x^*))$, $\textbf{y} = \textbf{1}_N \otimes y^*$. Thus, using the fact that $\textbf{F}(\textbf{1}_N \otimes y) = F(y)$ for any $y$, \eqref{eq:PIDynamics} becomes\vspace{-0.2cm}
\begin{align*}
\dot{\textbf{y}}_{-i} &= -\mathcal{S}\textbf{L}\textbf{1}_N \otimes y^* = 0\\
\dot x &= f(\pi(x^*))\! -\!G(F(y^*)\!+\!\nabla \phi(y^*)\!+\!\mathcal{R}\textbf{L}\textbf{1}_N \otimes y^*)\!=\!0
\end{align*}
Therefore $(\mathcal{S}\textbf{1}_N \otimes y^*, \pi(x^*))$ is an equilibrium point of \eqref{eq:PIDynamics}. Now, suppose there is another equilibrium point $(\bar{\textbf{y}}_{-i},\bar x)$. From \eqref{eq:PIDynamics},\vspace{-0.2cm}
\begin{align} \label{eq:Eq1}
&-\mathcal{S}\textbf{L}(\mathcal{S}^\top \bar{\textbf{y}}_{-i}-\mathcal{R}^\top \bar y) = 0\\
\label{eq:Eq2}
&f(\bar x)\! -\!G(\textbf{F}(\bar{\textbf{y}})\!+\!\nabla \phi(\bar y)\!+\!\mathcal{R}\textbf{L}\bar{\textbf{y}})=0
\end{align}
From \eqref{eq:Eq1}, $\bar{\textbf{y}}_{-i} = \textbf{1}_N\otimes\bar y$. Using  $\textbf{F}(\textbf{1}_N \otimes \bar y) = F(\bar y)$, \eqref{eq:Eq2} becomes
$f(\bar x)\! -\!G(F(\bar{y})+\nabla \phi(\bar y))=0$.
Following the proof of Lemma \ref{lemma:FIEq}, $\bar x = \pi(x^*)$ and $\bar y = y^*$.
\bibliographystyle{styles/IEEETran}
\bibliography{references}

\end{document}